\theoremstyle{plain}
\newtheorem{observation}[theorem]{Observation}
\title{Top-k Querying of Unknown Values\protect\\ under Order Constraints (Extended Version)}
\titlerunning{Top-k Querying of Unknown Values under Order Constraints (Extended Version)}
\author[1]{Antoine Amarilli}
\author[2]{Yael Amsterdamer}
\author[3]{Tova Milo}
\author[4,5]{Pierre~Senellart}
\affil[1]{LTCI, Télécom ParisTech, Université Paris-Saclay;
Paris, France\\\texttt{first.last@telecom-paristech.fr}}
\affil[2]{Bar Ilan University; Ramat Gan, Israel\\\texttt{first.last@biu.ac.il}}
\affil[3]{Tel Aviv University; Tel Aviv, Israel\\\texttt{last@cs.tau.ac.il}}
\affil[4]{DI, École normale supérieure, PSL Research University; Paris,
France\\\texttt{first.last@ens.fr}}
\affil[5]{Inria Paris; Paris, France}
\colorlet{darkred}{red!50!black}
\newcommand{\oof}[1]{\operatorname{O}\!\left({#1}\right)}
\newcommand{\oofb}[1]{\operatorname{O}({#1})}
\renewcommand{\leq}{\leqslant}
\renewcommand{\geq}{\geqslant}
\renewcommand{\epsilon}{\varepsilon}
\newcommand{\deft}[1]{\emph{#1}}
\DeclareMathOperator{\pw}{pw}
\newcommand{\calC}{\mathcal{C}}
\newcommand{\calP}{\mathcal{P}}
\newcommand{\calS}{\mathcal{S}}
\newcommand{\calT}{\mathcal{T}}
\newcommand{\calX}{\mathcal{X}}
\newcommand{\quot}[2]{#1/{#2}}
\newcommand{\card}[1]{\left|{#1}\right|}
\newcommand{\cardb}[1]{|{#1}|}
\newcommand\restr[2]{{
  \kern-\nulldelimiterspace 
  #1 
  _{|#2} 
  }}
\newcommand{\pvol}{V} 
\renewcommand{\d}{\,\mathrm{d}} 
\newcommand{\defeq}{\colonequals}
\newcommand{\fp}{\mathrm{FP}}
\newcommand{\sharpp}{\mathrm{\#P}}
\newcommand{\fpsp}{\fp^{\sharpp}}
\renewcommand{\a}{\mathrm{a}}
\renewcommand{\b}{\mathrm{b}}
\renewcommand{\c}{\mathrm{c}}
\renewcommand{\d}{\mathrm{d}}
\newcommand{\e}{\mathrm{e}}
\newcommand{\f}{\mathrm{f}}
\newcommand{\h}{\mathrm{h}}
\renewcommand{\l}{\mathrm{l}}
\renewcommand{\r}{\mathrm{r}}
\newcommand{\s}{\mathrm{s}}
\newcommand{\uniform}{{\sf Uniform}\xspace}
\newcommand{\stable}{{\sf Stable}\xspace}
\newcommand{\Xexact}{\calX_{\mathrm{exact}}}
\begin{document}

\maketitle

\setcounter{footnote}{0}

\begin{abstract}
Many practical scenarios make it necessary to evaluate top-$k$ queries over data
items with partially unknown values. This paper considers a setting where the
values are taken from a numerical domain, and where some \emph{partial order
constraints} are given over known and unknown values: under these constraints,
we assume that all possible worlds are equally likely. Our work is the first to
propose a principled scheme to derive the value distributions and expected
values of unknown items in this setting, with the goal of computing estimated
top-$k$ results by interpolating the unknown values from the known ones. We
study the complexity of this general task, and show tight complexity bounds,
proving that the problem is intractable, but can be tractably approximated. We
then consider the case of tree-shaped partial orders, where we show a
constructive PTIME solution. We also compare our problem setting to other
top-$k$ definitions on uncertain data.

\end{abstract}

\section{Introduction} \label{sec:intro}

\noindent Many data analysis tasks involve queries over ordered data, such as maximum and
top-$k$ queries, which must often be evaluated in presence of
\emph{unknown data values}.
This problem occurs in many real-life scenarios: retrieving or computing
exact data values is often expensive, but querying the partially unknown data may
still be useful to obtain approximate results, or to decide which data values
should be retrieved next. In such contexts, we can often make use of \emph{order constraints}
relating the data values, even when they are unknown: for instance, we know that object $A$
is
preferred to object $B$ (though we do not know their exact rating).

This paper thus studies the following general problem. We consider a set
of numerical values, some of which are unknown, and we assume a
\emph{partial order} on these values:  we may know that $x \geq y$ should hold although the values $x$ or $y$ are unknown. Our goal is to \emph{estimate} the unknown values, in a principled way, and to 
evaluate top-$k$ queries, namely find the items with (estimated) highest values.

{Without further information, one may
assume that every valuation compatible with the order constraints is
equally likely, i.e., build a probabilistic model where valuations are
\emph{uniformly distributed}. Indeed, uniform distributions in the
absence of prior knowledge is a common assumption in probabilistic data
management~\cite{abiteboul2011capturing,cheng2003evaluating,lian2008probabilistic}
for continuous
distributions on data values within an interval;
here we generalize to a uniform distribution over multiple unknown values.
Though the distribution is uniform, the 
dependencies between values lead to
non-trivial insights about unknown values and top-$k$ 
results, as we will
illustrate.}

\begin{figure}
\centering
\includegraphics[width=0.65\linewidth]{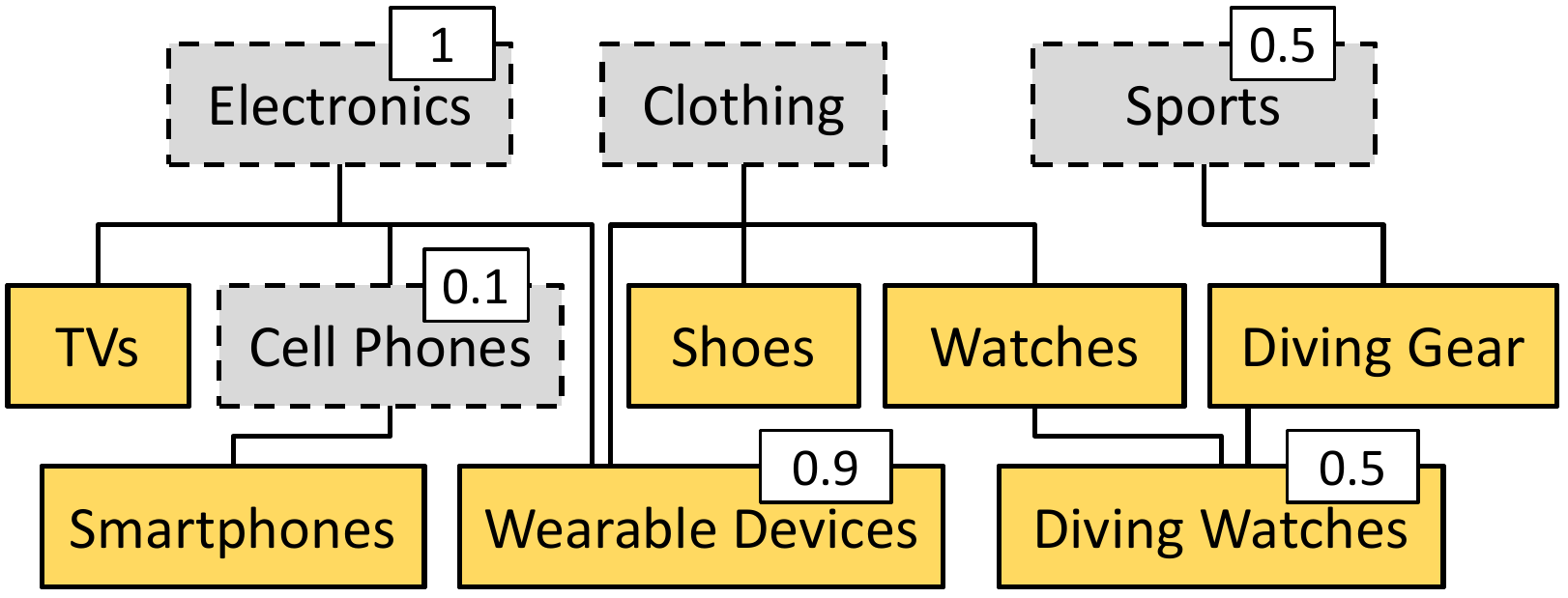}
\caption{Sample catalog taxonomy with compatibility scores}
\label{fig:catalog}
\end{figure}

\subparagraph*{Illustrative example.}
We consider a specific application setting where our problem occurs.
Consider a scenario where products are classified in a catalog taxonomy (Figure~\ref{fig:catalog})
using human input: 
{the relevance of a product to any category is captured by a
\deft{compatibility score}. Assessing this compatibility is often left to human judgement rather than attempting to derive it from records or statistics~\cite{chimera,BraggMW13,parameswaran2011human}. Thus we assume scores are obtained via questions to domain experts or to a 
	crowd
        of unqualified users\footnote{In
		the latter case, we aggregate the answers of multiple workers to
                obtain the compatibility score.}.} 
Our goal is to assign the product to the top-$k$ most compatible categories
among a set of \emph{end categories} (in yellow with a solid border), as
opposed to \emph{virtual categories} (dashed border). The virtual categories generalize the end
categories, and allow us to ask broader questions to experts, but we 
do not try to assign products to them, e.g., they do not have a dedicated page in our online store.

Imagine now that the product to classify is a \emph{smartwatch}, and that we want to find the top-2 end categories for it.
We asked an expert for its compatibility score with some categories (both end and
virtual categories), which we indicate in Figure~\ref{fig:catalog}.
Because expert input is costly, however, we wish to choose
the top-2 end categories based on the incomplete information that we have.
The na\"{i}ve answer is to look only at categories with known scores, and to
identify \textsf{Wearable Devices} and \textsf{Diving Watches} as the best end
categories.

{In this scenario, however, we can impose a natural partial order over
the scores, both known and unknown:
any product that belongs to a specific category (e.g., \textsf{Smartphones})
conceptually also belongs to each of the more general categories (e.g.,
\textsf{Cell Phones}). We can thus require that if a category $x$ is a sub-category of~$y$, then
the product's compatibility score for~$x$ should be at most its
score for~$y$. This can be explained in the instructions to the humans providing
the scores, and enforced by the user interface. Beyond this constraint, we do
not make any further assumption on the scores.}

Now, note that due to order constraints, 
the scores of
\textsf{Watches} and \textsf{Diving Gear}, while unknown, cannot be lower than that of \textsf{Diving Watches}; so either of the two could replace
\textsf{Diving Watches} in the top-2 answer. To choose between these categories, we observe that the score of \textsf{Diving Gear} must be
exactly~$0.5$ (which bounds it from above and below).
In contrast,
as the score of \textsf{Wearable Devices} is~$0.9$,
\textsf{Clothing} has a score of at least~$0.9$, so the score of
\textsf{Watches} can be anything between~$0.5$ and an unknown value which is
$\geq 0.9$. A better top-2 answer is thus
\textsf{Wearable Devices} and \textsf{Watches}, the latter being likely to have
a higher score than \textsf{Diving Watches} (or \textsf{Diving Gear}).

\subparagraph*{Other application domains.}
{Beyond the crowdsourcing example that we illustrate, our setting is
relevant to many other application domains involving different types of unknown
data. For instance, in the domain of Web services~\cite{SolimanIB10}, one may
wish to find the $k$-most relevant apartments for a user, given user-provided
criteria on the apartment (price range, location, size, etc.)\ but no precise way
to aggregate them into a relevance function. In this case, order constraints may
be imposed on the unknown relevance of apartments, whenever one apartment
\emph{dominates} another (i.e., is better for each criterion); exact-value constraints may capture user-provided ratings to viewed apartments; and a top-$k$
query could be used, e.g., to select the most relevant apartments among those available for sale.}

{As another example, in the context of top-$k$ queries over sensor data~\cite{HaghaniMA09,lian2010generic}, one may wish to find the $k$-fastest drivers in a certain region given partial data from speedometers and street cameras; comparing the progress and locations of vehicles may yield partial order constraints on their speed. Other domains include, e.g., data mining~\cite{amarilli2014complexity},
managing preference data~\cite{stoyanovich2013understanding}, or finding optimal
values of black-box functions expressing cost, running time, etc.}

\subparagraph*{Contributions.}
As previously mentioned, we assume a uniform probability distribution over valuations of unknown items, which we capture formally in our model via possible-world semantics. We then use the \emph{expected values} of items as an estimate of their unknown values, for the sake of top-$k$ computation (see Section~\ref{sec:prelim}).
Our work presents three main contributions using this general model, as follows.

First, in Section~\ref{sec:brute} we present a \emph{general and principled scheme to interpolate unknown values from known ones under partial order constraints}, and thereby obtain the top-$k$ such values. We implement this in an algorithm that is polynomial in the number of possible item orderings, and consequently show that in the worst case it is in $\fpsp$ in the size of the input.\footnote{{\#P is the class of counting
	problems that return the number of solutions of NP problems. $\fpsp$ is the
class of function problems that can be computed in PTIME using a \#P
oracle.}}  The problem of finding expected values has a geometric
characterization as centroid computation in high-dimensional polytopes
(as we explain further); however, our $\fpsp$ membership result goes beyond existing computational geometry results since the constraints that we consider, namely, partial order and exact-value constraints, correspond to special classes of polytopes not studied in the context of geometry. Indeed, centroid computation is generally not in $\fpsp$~\cite{lawrence1991polytope,rademacher2007approximating}.
Our work also departs from 
previous work on top-$k$ queries over incomplete or probabilistic
data~\cite{cormode2009semantics,JestesCLY11,SolimanIB10}: we do not make the simplifying assumption that item distributions are independent and given, but rather study the effect of constraints on individual item distributions. 

Our second main contribution, in Section~\ref{sec:complexity}, is to
establish \emph{hardness results}, and specifically, a matching lower
$\fpsp$ bound for top-$k$ computation. While the $\sharpp$-hardness of
computing expected values follows from the geometric characterization
of the problem~\cite{rademacher2007approximating}, we show that top-$k$
is hard even without computing expected values. Hence the $\fpsp$ bound
is tight for both interpolation and top-$k$; this shows that the
assumption regarding variable independence in previous work, that enables
PTIME solutions~\cite{cormode2009semantics,JestesCLY11,SolimanIB10}, indeed simplifies the problem. To complete the picture we discuss possible approximation schemes for the interpolation and top-$k$ problems, again by the connection to centroid computation.

Our third main contribution, in Section~\ref{sec:tractable}, is the study of \emph{tractable cases}, following the hardness of the general problem and the high complexity of approximation.
We devise a PTIME algorithm to compute expected
values and top-$k$ results when the order constraints are \emph{tree-shaped} or decomposable to trees.
This class of constraints is commonly encountered in the context of unknown
values (e.g., taxonomies of products are often trees rather than DAGs); yet, to our knowledge, the corresponding polytopes have no equivalents in the context of computational geometry.

Our results also include 
a review of existing definitions for top-$k$ over uncertain data, which motivates our particular choice of definition
(in Section~\ref{sec:variants}). We survey related work in more depth in
Section~\ref{sec:related} and conclude in Section~\ref{sec:conc}.
Our results are provided with complete proofs, given in appendix for
lack of space.

\section{Preliminaries and Problem Statement}\label{sec:prelim}
This section introduces the formal definitions for the problem that we study in this paper.
We model known and unknown item values as \emph{variables}, and order constraints as \emph{equalities} and \emph{inequalities} over them.
Then we define the possible valuations for the variables via possible-world semantics, and use this semantics to define a uniform distribution where all worlds are equally likely.
The problem of top-$k$ querying over unknown values can then be formally defined with respect to the expected values of variables in the resulting distribution.

\subsection{Unknown Data Values under Constraints}
Our input includes a set $\calX = \{x_1, \ldots, x_n\}$ of variables with unknown values $v(x_1),
\ldots, v(x_n)$, which we assume\footnote{Our results extend to other bounded,
continuous ranges, because we can rescale them to fall in $[0,1]$.}
to be in the range~$[0, 1]$.
We consider two kinds of constraints over them:

\begin{itemize}
\item \textbf{order constraints}, written
  $x_i \leq x_j$ for $x_i, x_j \in \calX$, encoding that $v(x_i) \leq v(x_j)$;
\item \textbf{exact-value constraints} to represent variables with known values,
  written\footnote{The number $\alpha$ is written as a rational number, represented by its numerator and denominator.} $x_i = \alpha$ for $0 \leq
  \alpha \leq 1$ and for $x_i \in \calX$, encoding that
  $v(x_i) = \alpha$.
\end{itemize}

In what follows, a \deft{constraint set} with constraints of both types is typically denoted~$\calC$. 
We assume that constraints in~$\calC$ are \emph{not contradictory} (e.g., we
forbid $x= 0.1$, $y = 0.2$, $y \leq z$, and $z \leq x$), and that they are
\emph{closed under implication}: e.g., if $x=\alpha$, $y = \beta$ are given, and $\alpha \leq \beta$, then $x \leq y$ is implied and thus should also be in $\calC$.
We can check in PTIME that $\calC$ is non-contradictory
by simply verifying that it does not entail a false inequality on exact values (e.g., $0.2\leq 0.1$ as in our previous example).
The closure of $\calC$ can be found in PTIME as a transitive closure computation~\cite{ioannidis1998efficient} that also considers exact-value constraints.
{We denote by $\Xexact$ the subset of $\calX$ formed of variables with exact-value constraints.}

\begin{example}
In the product classification example from the Introduction, a variable $x_i\in\calX$ would represent the compatibility score of the product to the $i$-th category. If the
score is known, we would encode it as a constraint $x_i=\alpha$. In addition,
  $\calC$ would contain the order constraint $x_i\leq x_j$ whenever category~$i$ is a sub-category of $j$ (recall that the score of a sub-category cannot be higher than that of an ancestor category).
\end{example}

\subsection{Possible World Semantics}\label{sec:possible}
The unknown data captured by~$\calX$ and $\calC$ makes infinitely many valuations of $\calX$ possible (including the true one).
We model these options via possible world semantics: a \deft{possible world}~$w$ for a constraint set~$\calC$ over
$\calX=\{x_1,\dots,x_n\}$ is a vector of values
$w = (v_1,
\ldots, v_n) \in [0, 1]^n$, corresponding to setting $v(x_i)
\colonequals v_i$
for all $i$, such that all the constraints of~$\calC$ hold under this
valuation. The set of all possible worlds is denoted by $\pw_\calX(\calC)$, or
 by $\pw(\calC)$ when $\calX$ is clear from context.

Notice that $\calC$ can be encoded as a set of \emph{linear constraints},
i.e., a set of inequalities between linear expressions on $\calX$ and
constants in~$[0,1]$. 
Thus, following common practice in linear programming, the feasible region of a set of linear constraints ($\pw(\calC)$ in our setting) 
can be characterized geometrically as a convex polytope, termed the \emph{admissible polytope}:
writing $n \defeq \card{\calX}$, each linear constraint defines a feasible
half-space of $\mathbb{R}^n$ (e.g., the half-space where $x\leq y$), and the
convex polytope $\pw(\calC)$ is the
intersection of all half-spaces.
In our setting 
the polytope
$\pw(\calC)$ is bounded within $[0,1]^n$, and it is non-empty by our assumption
that $\calC$ is not contradictory. 
With exact-value constraints, or order constraints such as $x_i \leq
x_j$ and $x_j \leq x_i$,
it may be the case that the dimension of this admissible
polytope is less than~$\card{\calX}$. Computing this dimension can easily
be done in PTIME (see, e.g.,~\cite{schrijver1986theory8}). 

\begin{example}
\label{ex:bounded}
Let $\calX=\{x,y,z\}$. If $\calC= \{x\leq y\}$, the admissible
polytope has dimension~$3$ and is bounded by the planes defined by $x=y$, $x=0$,
$y=1$, $z=0$ and $z=1$. If we add to~$\calC$
the constraint $y=0.3$, the admissible polytope is a 2-dimensional
rectangle bounded by $0\leq x\leq 0.3$ and $0\leq z \leq 1$ on the
$y=0.3$ plane.
We cannot add, for example, the constraint $x=0.5$, because $\calC$ would become
contradictory.
\end{example}

\subsection{Probability Distribution}
Having characterized the possible worlds of $\pw(\calC)$, we 
assume a \emph{uniform} probability distribution over $\pw(\calC)$, as indicated
in the Introduction. This captures the case when all possible worlds are equally
likely, and is a natural choice when we have no information about which
valuations are more probable.

Since the space of possible worlds is continuous, we formally define this distribution via a probability density function (pdf), as follows. Let $\calX$ and $\calC$ define a $d$-dimensional polytope
$\pw_\calX(\calC)$ for some integer~$d$. The \deft{$d$-volume} (also called the Lebesgue
measure~\cite{integrationTextbook} on $\mathbb{R}^d$) is a measure for continuous subsets of $d$-dimensional space, which coincides with length, area, and volume for dimensions 1, 2,
and~3, respectively. We denote by $\pvol_d(\calC)$ the $d$-volume of the admissible polytope, or simply $\pvol(\calC)$ when $d$ is the dimension of $\pw(\calC)$.

\begin{definition}\label{def:uniform}
  The \deft{uniform pdf} $p$ 
  maps each possible world $w \in \pw(\calC)$
  to the constant
  $p(w)\colonequals\sfrac{1}{\pvol(\calC)}$.
\end{definition}

\subsection{Top-k Queries}
\label{sec:deftopk}
We are now ready to formally define the main problem studied in this paper, namely, the evaluation of \emph{top-$k$ queries} over unknown data values.
The queries that we consider retrieve the $k$ items that are estimated to have the highest values, \emph{along with their estimated values}, with ties broken arbitrarily.
We further allow queries to apply a \emph{selection operator} $\sigma$ on the items before performing the top-$k$ computation. 
In our example from the Introduction, this is what allows us to select the
top-$k$ categories among only the end categories.
We denote the subset of $\calX$ selected by~$\sigma$ as $\calX_\sigma$.

If all item values are known, the semantics of top-$k$ queries is clear.
In presence of unknown values, however, the semantics must be redefined to determine how the top-$k$ items and their values are estimated.
In this paper, we estimate unknown items by their \emph{expected value over all
possible worlds}, i.e., their expected value according to the uniform pdf $p$
defined above on
$\pw(\calC)$.
This corresponds to \emph{interpolating} the unknown values from the known ones, and then querying the result. 
We use these interpolated values to define the top-$k$ problem as computing the $k$
variables with the highest expected values, but we also study on its own the
interpolation problem of computing the expected values.

To summarize, the two formal problems that we study on constraint sets are:

\begin{description}
\item[Interpolation.] Given a constraint set $\calC$ over $\calX$ and variable $x\in\calX$,
the \deft{interpolation problem} for $x$ is to compute the expected value
of $x$ in the uniform distribution
over
$\pw_\calX(\calC)$. 
\item[Top-$k$.] Given a constraint set $\calC$ over~$\calX$, a
selection predicate~$\sigma$, and an integer~$k$, the \deft{top-$k$ computation problem} is to compute the ordered list of the $k$ maximal expected values of variables in $\calX_\sigma$ (or less if $\card{\calX_\sigma} \leq k$), with ties broken arbitrarily.
\end{description}

We review other definitions of top-$k$ on uncertain data 
in Section~\ref{sec:variants}, where we justify our choice of semantics.

\subparagraph*{Alternate phrasing.}
The Interpolation problem can also be defined geometrically, as the computation of the \emph{centroid} (or \emph{center of mass}) of the admissible
polytope: the point~$G$ such that all vectors relative to $G$ originating at points within
the polytope sum to zero. 
The constraints that we study correspond to a special kind of
polytopes, for which we will design a specific algorithm in the next section,
and derive an $\fpsp$ membership bound which does not hold for general polytopes
(as explained in the Introduction). However, 
the geometric connection will become useful when we study the
complexity of our problem in Section~\ref{sec:hardness}.

\section{An Algorithm for Interpolation and Top-k}\label{sec:brute}

Having defined formally the problems that we study,
we begin our complexity analysis 
by designing an algorithm that computes the expected value of variables.

The algorithm enumerates all possible
orderings of the variables (to be defined formally below), but it is still
nontrivial: we must handle exact-value constraints specifically, and we must
compute the probability of each ordering to determine its weight in the overall expected value computation.
From the algorithm, we will deduce that our interpolation and top-$k$ 
problems are in $\fpsp$.

\subparagraph*{Eliminating ties.}
To simplify our study, we will eliminate from the start the problem of
\emph{ties}, {which will allow us to assume that values in all worlds are totally ordered}.
We say that a possible world $w = (v_1, \ldots, v_n)$ of $\calC$ has a
\emph{tie} if  $v_i = v_j$ for some $i,j$. {Note that
occasional ties, not enforced by $\calC$, have \emph{an overall
probability of~$0$}: intuitively, if the admissible polytope is
$d$-dimensional, then all the worlds where $v_i = v_j$ correspond to a
$(d-1)$-dimensional hyperplane bounding or intersecting the polytope. A
finite set of such hyperplanes (for every pair of variables) has total
$d$-volume~$0$. Since our computations (volume, expected value) involve integrating over possible worlds, a set of worlds with total probability~0 does not affect the result.}

What is left is to consider ties enforced by $\calC$ (and thus having
probability~$1$).
In such situations,
we can rewrite $\calC$ by merging these variables 
to obtain an equivalent constraint set where ties have probability~$0$.
Formally:

\begin{lemmarep}\label{lem:noties}
  For any constraint set\/ $\calC$, we can construct in PTIME a constraint
  set\/~$\calC'$ such that the probability that the possible
  worlds of\/ $\calC'$
  have a tie (under the
  uniform distribution) is zero, and such that any interpolation or top-$k$
  computation problem on\/ $\calC$
  can be reduced in PTIME to the same type of problem on\/ $\calC'$.
\end{lemmarep}

\begin{proof}
	We consider two types of ties:
        \emph{persistent ties}, which are enforced by~$\calC$ and hold in each
        possible world, and 
        \emph{occasional ties}, which are not enforced by $\calC$ and only hold in
        some possible worlds.
	
	We first formally prove that occasional ties have a total probability~0:
        let $d$ be the dimension of $\pw(\calC)$ and $n \defeq \card{\calX}$.
        Assume that $x,y$ have a tie in some worlds and do not have a tie in
        other worlds (in particular, $x=y$ is not implied by $\calC$). If we add
        constraints to enforce a persistent tie, $\pw(\calC\cup\{x\leq y, y\leq
        x\})$ is now at most $(d-1)$-dimensional. Geometrically, this is a projection of the admissible polytope on the $x=y$ hyperplane. The $d$-volume of $\pw(\calC\cup\{x\leq y, y\leq x\})$ is thus~0, i.e., this set of worlds has probability~0. Then, taking the union of $\mathrm{O}(n^2)$ such sets, to obtain all worlds with occasional ties between any pair of variables, the result has total probability~0.  
	
	Next, to handle persistent ties, we define a constraint set where persistently tied variables are replaced by a single variable. Let  $\quot{\calX}{\sim}$ be the set of equivalence classes of $\calX$. We define $\quot{\calC}{\sim}$ to be the constraint set where every
	occurrence of a variable $x_i\in\calX$ is replaced by some representative $s(x_i)$ of its equivalence class. By definition $\quot{\calC}{\sim}$ does not have persistent ties; and there is clearly a bijection between $\pw(\calC)$ and $\pw(\quot{\calC}{\sim})$.
	
	It remains to show that problems on $\calC$ can be reduced (in PTIME) to $\quot{\calX}{\sim}$. The interpolation problem for a variable $x_i$ on $\calC$ clearly reduces to
	interpolation for $s(x_i)$ on $\quot{\calC}{\sim}$, since due to the
        bijection the expected value of $x_i$ on $\calC$ equals that of $s(x_i)$
        on $\quot{\calC}{\sim}$. If no selection is applied, the reduction for
        top-$k$ works by selecting the top-$k$ variables in
        $\quot{\calC}{\sim}$, replacing each $s(x_i)$ by the variables of its
        equivalence class $[s(x_i)]$, and truncating the obtained ranked list to
        length $k$ (recall that ties can be broken arbitrarily). If a selection
        $\sigma$ is applied, then apply top-$k$ only to representatives of
        selected variables from $\calX_{\sigma}$; and replace each
        representative $s(x_i)$ in the top-$k$ result only by the variables from $[s(x_i)]\cap\calX_{\sigma}$. 
\end{proof}

Hence, we assume from now on that ties have zero probability in $\calC$, so that
we can ignore possible worlds with ties without affecting the correctness of our
analysis. Note that this implies that all of our results also hold for \emph{strict inequality constraints}, of the forms $x<y$ and $x \neq y$.

Under this assumption, we first study in Section~\ref{sec:total} the case where
$\calC$ is a total order. We then handle arbitrary $\calC$ by aggregating over possible variable orderings, in Section~\ref{sec:general}.

\subsection{Total Orders}\label{sec:total}
\begin{toappendix}
  \subsection{Total Orders (Section~\ref{sec:total})}
\end{toappendix}

In this section we assume $\calC$ is a total order\/ $\calC^n_1(\alpha, \beta)$
defined as $x_0 \leq x_1 \leq \cdots \leq x_n \leq x_{n+1}$, where $x_0=\alpha$ and $x_{n+1}=\beta$ are variables with exact-value constraints in $\Xexact$.

We first consider \emph{unfragmented} total orders, where $x_1,\dots,x_n\not\in \Xexact$.
In this case, we can show that the expected value of $x_i$, for $1 \leq i \leq n$, corresponds to a \emph{linear interpolation} of the unknown variables between $\alpha$ and $\beta$, namely:
\(
\textstyle \frac{i}{n+1}\cdot(\beta-\alpha)+\alpha
\). This can be shown formally via a connection to the expected value of
the order statistics of samples from a uniform distribution, which
follows a Beta distribution~\cite{gentle2009computational}.

Now consider the case of \emph{fragmented} total orders, 
where $\calC$ is allowed to contain more exact-value constraints
than the ones on~$\alpha$ and~$\beta$. We observe that we can \emph{split}
the total order into \emph{fragments}: by cutting at each variable that has an
exact-value constraint, we obtain sub-sequences of variables which follow an
unfragmented total order. We can then compute the expected values 
of each fragment independently, and compute the total order volume as the
product of the fragment volumes. The correctness of this computation follows from a more general result
(Lemma~\ref{lem:splittingb}) stated and proven in Section~\ref{sec:tractable}.

Hence, given a constraint set $\calC$ imposing a (possibly fragmented) total order, the expected value of $x_i$ can be
computed as follows. If $x_i\in \Xexact$,
analysis is trivial.
Otherwise, we consider the fragment 
$\calC_{p+1}^{q-1}(v_p,v_{q})$ that contains $x_i$; namely, $p$ is the maximal index such that
$0\leq p <i$ and $x_p\in\Xexact$, and $q$ is the minimal
index such that $i < q \leq n +1$ and $x_{q}\in\Xexact$. The expected value
of $x_i$ can then be computed within
$\calC_{p+1}^{q-1}(v_p,v_{q})$ using linear interpolation.

The
following proposition summarizes our findings:
\begin{proposition}\label{prp:total-order-no-exact}
	Given a constraint set $\calC$ implying a total
	order, 
	the expected value 
	of any variable $x_i\in\calX$ can be computed in PTIME.
\end{proposition}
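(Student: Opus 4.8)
The plan is to assemble the ingredients already developed in this section, namely the linear-interpolation formula for unfragmented total orders and the fragment-splitting observation (whose correctness is deferred to Lemma~\ref{lem:splittingb}). First I would invoke Lemma~\ref{lem:noties} to assume without loss of generality that $\calC$ has no persistent ties, so that the order implied by~$\calC$ is strict on the representatives. Since $\calC$ implies a total order and is closed under implication, I can recover this order explicitly in PTIME by sorting the variables according to the implied relations, relabeling them as $x_1 \leq \cdots \leq x_n$. Because all values lie in $[0,1]$, I would add two sentinel exact-value variables $x_0 = 0$ and $x_{n+1} = 1$ to bound the chain from below and above, obtaining a chain $x_0 \leq x_1 \leq \cdots \leq x_n \leq x_{n+1}$ with $x_0, x_{n+1} \in \Xexact$, matching exactly the $\calC_1^n(\alpha,\beta)$ template.

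Next I would dispatch on whether the target variable $x_i$ carries an exact-value constraint. If $x_i \in \Xexact$, its expected value is simply its fixed value, read off in constant time. Otherwise I locate the fragment containing $x_i$: scanning outward from position~$i$, let $p$ be the largest index $p < i$ with $x_p \in \Xexact$ and $q$ the smallest index $q > i$ with $x_q \in \Xexact$; both exist thanks to the sentinels, and both are found in PTIME. The sub-chain $x_{p+1} \leq \cdots \leq x_{q-1}$, bounded by the exact values $v_p$ and $v_q$, is precisely an unfragmented total order $\calC_{p+1}^{q-1}(v_p, v_q)$ on $q-p-1$ free variables, in which $x_i$ sits at position $i-p$. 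Applying the interpolation formula recalled above then yields the expected value $\frac{i-p}{q-p}\,(v_q - v_p) + v_p$, computable in PTIME.

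The only subtlety requiring justification is that this per-fragment computation is legitimate, i.e.\ that the expected value of $x_i$ in the full chain coincides with its expected value in the isolated fragment $\calC_{p+1}^{q-1}(v_p, v_q)$; intuitively this holds because, conditioned on the chain structure, the fragments behave as mutually independent unfragmented total orders and the global volume factorizes as the product of the fragment volumes. I expect this factorization to be the only genuinely non-routine point, and it is exactly what the more general splitting result of Lemma~\ref{lem:splittingb} guarantees, so I would simply cite it here. Everything else---sorting into the total order, adding sentinels, identifying the enclosing fragment, and evaluating the closed-form interpolation---is elementary and manifestly polynomial. Combining the exact-value and free-variable cases, the expected value of any $x_i \in \calX$ is thus obtained in PTIME.
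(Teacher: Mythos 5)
Your proposal is correct and follows essentially the same route as the paper: reduce to the tie-free case, cut the chain into unfragmented fragments at the exact-value constraints (with the $[0,1]$ bounds serving as sentinel endpoints), apply the linear-interpolation formula $\frac{i-p}{q-p}(v_q-v_p)+v_p$ within the fragment containing $x_i$, and justify the per-fragment factorization by citing Lemma~\ref{lem:splittingb}. The only additions (explicitly sorting to recover the total order and invoking Lemma~\ref{lem:noties}) are routine steps the paper handles in its surrounding setup.
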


\begin{toappendix}
  We now give the omitted formal details about the ``Fragmented Total Orders'' paragraph. 
  The soundness of splitting can be proved via the possible world semantics:
assume that $x_i\in \Xexact$. 
Then the possible
valuations of $x_1,\dots,x_{i-1}$ are affected only by the constraints in
$\calC$ on $x_1,\dots,x_i$, and similarly for $x_{i+1},\dots,x_{n}$.
  The formal claim is as follows, which follows from Lemma~\ref{lem:splittingb}:

\begin{lemma}
  \label{lem:splitting}
  {Let $\calC$ be a constraint set implying a fragmented total order
  $x_1\leq\dots\leq x_{n}$, and $x_i=v$ a constraint in $\calC$. Let
  $\calC_{\leq i},\calC_{\geq i}\subseteq\calC$ be the constraints restricted to
  the variable subsets $x_0,\dots,x_i$ and $x_i,\dots,x_{n+1}$ respectively.
  Then there is a bijective correspondence between $\pw_{x_{1}\dots
  x_{i}}(\calC_{\leq i})\times\pw_{x_{i}\dots x_{n}}(\calC_{\geq i})$ and
  $\pw(\calC)$.}
\end{lemma}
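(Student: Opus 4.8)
The plan is to exhibit the bijection explicitly through a pair of mutually inverse maps, one projecting a world of $\calC$ onto the two fragments and one gluing two fragment-worlds back together; the only substantive point is that the exact value $v$ shared at $x_i$ makes every constraint of $\calC$ recoverable from the two restrictions.

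First I would define the forward map. Given $w=(v_1,\dots,v_n)\in\pw(\calC)$, we have $v_i=v$ because $x_i=v\in\calC$, and I send $w$ to the pair $(w_\leq,w_\geq)$ with $w_\leq\defeq(v_1,\dots,v_i)$ and $w_\geq\defeq(v_i,\dots,v_n)$. This is well-defined: since $\calC_{\leq i}$ and $\calC_{\geq i}$ are, by definition, subsets of $\calC$ mentioning only the respective variables, every constraint they contain already holds in $w$, so $w_\leq\in\pw_{x_1\dots x_i}(\calC_{\leq i})$ and $w_\geq\in\pw_{x_i\dots x_n}(\calC_{\geq i})$. Observe that both fragments carry the coordinate for $x_i$, which the exact-value constraint $x_i=v$ (present in both $\calC_{\leq i}$ and $\calC_{\geq i}$) pins to the common value $v$.

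The inverse map glues. Given $w_\leq=(v_1,\dots,v_i)$ and $w_\geq=(v_i',\dots,v_n)$ from the two factors, the shared coordinate satisfies $v_i=v_i'=v$, so the pair agrees on $x_i$ and I may form the vector $w\defeq(v_1,\dots,v_i,v_{i+1},\dots,v_n)$ unambiguously. The crux, and the step I expect to be the main obstacle, is to verify $w\in\pw(\calC)$. Since $\calC$ implies the total order $x_1\leq\cdots\leq x_n$ and is closed under implication, each order constraint of $\calC$ has the form $x_j\leq x_k$ with $j\leq k$; such a constraint holds in $w$ because $w$ respects the whole chain $v_1\leq\cdots\leq v_n$, which in turn follows from the facts that $w_\leq$ satisfies the chain constraints on indices $\leq i$ and $w_\geq$ those on indices $\geq i$ (both being constraints retained in the respective restrictions by closure under implication), together with the consistent overlap at $x_i=v$. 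The exact-value constraints of $\calC$ are each contained in $\calC_{\leq i}$ or $\calC_{\geq i}$ and are therefore inherited by $w$. This is exactly where the hypotheses are used: total order and closure under implication rule out any ``crossing'' constraint not mediated by $x_i$, and the exactness of $x_i=v$ makes the gluing consistent.

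Finally I would check that the two maps are mutually inverse: projecting a glued world and then regluing returns the same world, and gluing a projected pair and then projecting returns the same pair, because in every case the shared coordinate $x_i$ carries the identical value $v$ on both sides. This yields the claimed bijective correspondence. Note that the statement asserts only a set-theoretic bijection, so no volume or measure argument is needed here; the multiplicativity of fragment volumes used for the expected-value computation is established separately.
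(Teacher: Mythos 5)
Your proof is correct, and the bijection you construct (project onto the two fragments, glue by merging the duplicated $x_i$-coordinate pinned to $v$) is exactly the correspondence the paper has in mind. The route differs in one respect: the paper does not verify this lemma directly, but instead derives it as a special case of the general splitting result, Lemma~\ref{lem:splittingb}, which decomposes an arbitrary constraint set along its uninfluence classes; there, the delicate step (that a glued world satisfies order constraints spanning two classes) is handled by a path argument through a variable of $\Xexact$ lying between the endpoints. You instead exploit the total-order structure: since $\calC$ is closed under implication and (under the section's standing no-ties assumption) every order constraint has the form $x_j \leq x_k$ with $j \leq k$, satisfaction of the full chain $v_1 \leq \cdots \leq v_n$ by the glued vector already implies every constraint by transitivity, with no need to route through $x_i$ at all. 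Your argument buys a self-contained, more elementary proof of this special case; the paper's buys reuse of a single lemma that also covers non-total orders (and indeed is the same lemma invoked for the tree-shaped case later). One small point worth making explicit: ruling out reversed constraints $x_k \leq x_j$ with $j < k$ relies on the no-ties assumption (such a constraint would force a persistent tie), not on closure under implication by itself; this is the same assumption the paper uses when it notes that the induced partial order is antisymmetric.
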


  The correspondence is defined from the left-hand side to the right-hand side
  as merging the duplicated component for $x_i$ 
  to obtain worlds of exactly $n$ values.
\end{toappendix}

\subsection{General Constraint Sets}\label{sec:general}
We can now extend the result for total orders to
an expression of the expected value 
for a general constraint set $\calC$.
We apply the previous process to each possible total ordering of the variables, and aggregate the results.
To do this, we define the notion of \deft{linear extensions}, inspired by partial order theory:

\begin{definition}
Given a constraint set $\calC$ over $\calX$, we say that a constraint set $\calT$ is a \deft{linear extension} of $\calC$ if (i) $\calT$ is a total order; (ii) the exact-value constraints of $\calT$ are exactly those of $\calC$; and (iii) $\calC\subseteq \calT$, namely every constraint $x\leq y$ in $\calC$ also holds\footnote{The linear extensions of $\calC$ in this sense are thus exactly the linear extensions of the partial order on~$\calX$ imposed by $\calC$: this partial order is indeed antisymmetric because $\calC$ has no ties.}
  in $\calT$.
\end{definition}

\begin{algorithm}[t]
   \SetNlSty{scriptsize}{}{}
   \KwIn {Constraint set $\calC$ on variables $\calX$ with $n \defeq
   \card{\calX}$ where ties have probability~$0$ and where value range $[0,1]$
   is enforced by constraints; variable $x\in\calX$}
   \KwOut {Expected value of $x$}
   \lIf{$x$ is in $\Xexact$, i.e., has an exact-value constraint in $\calC$ to some~$v$}{\Return $v$}
    $\mathbb{E}_{\calC}[x] \leftarrow 0$; $\pvol(\calC)\leftarrow 0$\;
    $m\leftarrow\card{\Xexact}$\;
   Write $x_0 < \cdots < x_{m-1}$ the variables of~$\Xexact$ and $v_0 <\dots
   <v_{m-1}$ their values\; 
   \ForEach{linear extension $\calT$ of $\calC$}{
   	Write $i_0< \dots< i_{m-1}$ the \emph{indices} in $\calT$ of variables
        $x_0, \ldots, x_{m-1}$
        in $\Xexact$\;
        Write $k$ the \emph{index} in~$\calT$ of the variable~$x$\;
        Write $i_j,i_{j+1}$ the indices of variables from $\Xexact$ s.t.\ $i_j < k < i_{j+1}$\;
	$\mathbb{E}_{\calT}[x] \leftarrow
        \textup{ExpectedValFrag}(i_j, i_{j+1}, k, v_j, v_{j+1})$ 
        \tcp*{Expected value of~$x$ in~$\calT$} \label{lin:compexp}
     
     $\pvol(\calT)\leftarrow \prod_{l = 0}^{m-2}  \textup{VolumeFrag}(i_l, i_{l+1},v_l,v_{l+1})$
     \tcp*{Volume of~$\calT$}
     \label{lin:compvol}
      $\mathbb{E}_{\calC}[x] \leftarrow \mathbb{E}_{\calC}[x] + \pvol(\calT)
      \times \mathbb{E}_{\calT}[x]$\tcp*{Sum exp.\ value of~$x$ weighted by $V(\calT)$}\label{lin:sumexp}
      $\pvol(\calC)\leftarrow \pvol(\calC) + \pvol(\calT)$\tcp*{Sum of total order volumes}\label{lin:sumvol}
      
}
\Return{$\frac{\mathbb{E}_{\calC}[x]}{\pvol(\calC)}$}\label{lin:last}\;
   \SetKwProg{myfunc}{Function}{}{}
   \myfunc{$\textup{ExpectedValFrag}(p, q, k, \alpha,\beta)$}{
      	\KwIn {$p<q$: indices; $k$: requested variable index; $\alpha < \beta$: exact values at $p,q$ resp.}
        \KwOut{Expected value of variable~$x_k$ in the fragment $\calC^{q-1}_{p+1}(\alpha,
        \beta)$}
      	$n\leftarrow q-p-1$ \tcp*{num.\ of variables in the fragment which are $\not\in\Xexact$}
        \Return $\frac{k-p}{n+1}\cdot(\beta-\alpha) + \alpha$ \tcp*{linear interpolation (see Section~\ref{sec:total})}}
    \myfunc{$\textup{VolumeFrag}(p,q,\alpha,\beta)$}{
      	\KwIn {$p<q$: indices; $\alpha < \beta$: exact values at $p,q$ resp.}
      	\KwOut {$\pvol(\calC_{p+1}^{q-1}(\alpha,\beta))$: volume of the total order fragment between indices $p,q$}
      	$n\leftarrow q-p-1$ \tcp*{num.\ of variables in the fragment which are $\not\in\Xexact$}
      	\Return $\frac{(\beta-\alpha)^{n}}{n!}$\tcp*{Volume of
        $[\alpha,\beta]^n$ divided by num.\ of total orders}\label{lin:fragvol}}
   \caption{Compute the expected value of a variable}
\label{alg:brute}
\end{algorithm}
Algorithm~\ref{alg:brute} presents our general scheme to compute the expected
value of a variable $x \in \calX$ under an arbitrary constraint set~$\calC$, assuming the uniform distribution on $\pw(\calC)$.

The algorithm iterates over each linear extension $\calT$ of~$\calC$, and
computes the expected value of $x$ in~$\calT$
and the overall probability of $\calT$ in $\pw(\calC)$.
{A linear extension is a total order, so $x$ is within a particular
fragment of it, namely, between the indices of two consecutive variables
with exact-value constraints, $i_j$ and $i_{j+1}$. The \emph{expected value of~$x$
in~$\calT$}, denoted by $\mathbb{E}_{\calT}[x]$, is then affected only by the
constraints and variables of this fragment, and can be computed using linear interpolation by the function
\textup{ExpectedValFrag} (line~\ref{lin:compexp}). 

Now, the final expected value of $x$
in~$\calC$ is the average of all $\mathbb{E}_{\calT}[x]$ weighted by the
probability of each linear extension $\calT$, i.e., the volume of~$\pw(\calT)$
divided by the volume of~$\pw(\calC)$. Recall that, by Lemma~\ref{lem:noties}, worlds with ties have total volume~0 and do not affect this expected value.
We compute the volume of $\calT$ as the \emph{product of volumes of its
fragments} (line~\ref{lin:compvol}). The volume of a fragment, computed by
function \textup{VolumeFrag}, is
the volume of $[\alpha,\beta]^n$, i.e., all assignments to the $n$
variables of the fragment in~$[\alpha,\beta]$, divided by the number of
orderings of these variables, to obtain the volume of one specific order 
(line~\ref{lin:fragvol}).}

The complexity of Algorithm~\ref{alg:brute} is polynomial in the number of linear extensions of~$\calC$, as
we can enumerate them in constant amortized time~\cite{PruesseR94}.
However, in the general case, there may be up to $\card{\calX}!$ linear
extensions. To obtain an upper bound in the general case, we note that we can
rescale all constraints so that all numbers are integers, and then
nondeterministically sum over the linear extensions. This yields our $\fpsp$ upper bound:

\begin{toappendix}
  \subsection{General Constraint Sets (Section~\ref{sec:general})}
\end{toappendix}

\begin{theoremrep}\label{thm:expectmember}
  Given a constraint set\/ $\calC$ over $\calX$ and $x
  \in \calX$ (resp., and a selection
  predicate~$\sigma$, and an integer~$k$), determining the expected value
  of $x$ in $\pw(\calC)$ under the uniform distribution (resp., the top-$k$ computation problem
  over $\calX$, $\calC$, and $\sigma$) is in $\fpsp$.
\end{theoremrep}

The $\fpsp$ membership 
for interpolation
does not extend to centroid computation in
general convex polytopes, which is not in $\fpsp$~\cite{lawrence1991polytope,rademacher2007approximating}.
Our algorithm thus relies on the fact that the polytope $\pw(\calC)$ is of a
specific form, defined with order {and exact-value} constraints.
The same upper bound for the top-$k$ problem immediately follows. We will show in Section~\ref{sec:complexity} that this $\fpsp$ upper bound is tight.

\begin{proof}
  Let $\calC$ be an arbitrary constraint set with order constraints and exact-value
  constraints on a variable set $\calX$, and let $n \defeq \card{\calX}$.
  Use Lemma~\ref{lem:noties} to ensure that there are no ties.
  To simplify the reasoning, we will make all values occurring in exact-value
  constraints be integers that are multiples of $(n+1)!$ as follows: let $\Delta$ be $(n+1)!$ times
  the product of the denominators of all exact-value constraints occurring in
  $\calC$, which can be computed in PTIME, and consider $\calC'$ the constraint
  set defined on $[0, \Delta]^n$ by keeping the same variables and order
  constraints, and replacing any exact-value constraint $x_i = v$ by $x_i = v
  \Delta$; the constraint set $\calC'$ is computable in PTIME from $\calC$, and the polytope
  $\pw(\calC')$ is obtained by scaling $\pw(\calC)$ by a factor of $\Delta$
  along all dimensions; hence, if we can compute the expected value of $x_i$ in
  $\calC'$ (which is the coordinate of the center of mass of $\pw(\calC')$ on
  the component corresponding to $x_i$), we can compute the expected value of
  $x_i$ in $\calC'$ by dividing by $\Delta$. So we can thus assume that
  $\pw(\calC')$ is a polytope of $[0, \Delta]^n$ where all exact-value
  constraints are integers which are multiples of $(n+1)!$.

  We use Lemma~5.2 of~\cite{abiteboul2011capturing} to argue that the volume of
  $\pw(\calC)$ can be computed in \#P. The PTIME generating Turing machine~$T$,
  given the constraint set~$\calC$, chooses nondeterministically a linear extension of $(\calX,
  {\leq_{\calC}})$, which can clearly be represented in polynomial space and checked in
  PTIME. The PTIME-computable function $g$ computes, as a rational, the volume of the polytope
  for that linear extension, and does so according to the scheme of
  Section~\ref{sec:brute}: the volume is the product of the volumes of each $\calC_1^n(\alpha,
  \beta)$, whose volume is $\frac{(\beta - \alpha)^n}{n!}$. This
  is clearly PTIME-computable, and as $\alpha$ and $\beta$ are values occurring
  in exact-value constraints, they are integers and multiples of $n!$, so the
  result is an integer, so the overall result is a product of integers, hence an
  integer. By Lemma~5.2 of~\cite{abiteboul2011capturing}, $V(\calC')$,
  which is the sum of $V(\calT)$ across all linear extensions $\calT$
  of~$\calC'$ (because there are no ties), is computable in \#P.

  We now apply the same reasoning to show that the sum, across all linear
  extensions $\calT$, of $V(\calT)$ times the expected value of
  $x_i$ in~$\calT$, is computable in \#P. Again, we use Lemma~5.2
  of~\cite{abiteboul2011capturing}, with $T$
  enumerating linear extensions, and with a function $g$ that computes the
  volume of the linear extension as above, and multiplies it by the expected
  value of $x_i$, by linear interpolation in the right $\calC_{p+1}^{q-1}$ as
  in the previous section (it is an integer, as all values of exact-value
  constraints are multiples of~$q-p\leq n+1$). So this
  concludes, as the expected value $v$ of~$x_i$ is $\frac{1}{V(\calC')}
  \sum_{\calT} V(\calT) v_{\calT}$ where $v_{\calT}$ is the expected value
  of~$x_i$ for $\calT$, and we can compute both the sum and the denominator in
\#P. Hence, the result of the division, and reducing back to the answer for the
original $\calC$, can be done in $\fpsp$.
\end{proof}

We also provide a complete example to illustrate the constructions of this section.

{\setlength{\columnsep}{1.5em}
  \setlength{\intextsep}{0em}
\begin{wrapfigure}[]{r}{0pt}
\centering
\begin{tikzpicture}[xscale=.9,yscale=1.3]
\node (x) at (0,-2) {$x$};
\node[anchor=center] (yy) at (1,-1) {$y' = \gamma$};
\node (y) at (-1,-1) {$y$};
\node (z) at (0,0) {$z$};
\draw[->] (x) -- (y.south);
\draw[->] (x) -- (yy.south);
\draw[->] (y.north) -- (z);
\draw[->] (yy.north) -- (z);
\end{tikzpicture}
\end{wrapfigure}
\subparagraph*{Full Example.}
We exemplify our scheme on
variables $\calX = \{x, y, y', z\}$ and on the constraint set
$\calC$ generated by the order constraints $x \leq y$, $y \leq z$, $x \leq
y'$, $y' \leq z$ and the exact-value constraint $y' = \gamma$ for some fixed $0 <
\gamma < 1$.
Remember that we
necessarily have $0 \leq x$ and $z \leq 1$ as well. The constraints of $\calC$
are closed under implication, so they also include $x \leq z$.
The figure shows 
the Hasse diagram of 
the partial order defined by $\calC$ on $\calX$.
Note that ties have a probability of zero in $\pw(\calC)$.

The two linear extensions of $\calC$ are $\calT_1: x \leq y \leq y' \leq z$
and $\calT_2: x \leq y' \leq y \leq z$. 
Now, $\calT_1$ is a fragmented total order, and we have
$\pw(\calT_1) = \pw_{\{x,y\}}(\calC') \times \{\gamma\} \times [\gamma, 1]$ where $\calC'$ is
defined on variables $\{x, y\}$ by $0 \leq x \leq y \leq \gamma$. We can compute
the volume of $\pw(\calT_1)$ as $\alpha_1 = \smash{\frac{\gamma^2}{2}}\times (1 -
\gamma)$.
Similarly the volume of $\pw(\calT_2)$ is $\alpha_2 = \gamma \times \smash{\frac{(1 -
\gamma)^2}{2}}$.

Let us compute the 
expected value of~$y$ for $\calC$. In $\calT_1$ its expected value is \(
\textstyle \mu_1=\mathbb{E}_{\calT_{1}}[y]=\frac{2}{3}\cdot(\gamma-0)+0 = \frac{2}{3}\gamma.
\) In $\calT_2$ its expected value is \(
\textstyle \mu_2=\mathbb{E}_{\calT_{2}}[y]=\frac{1}{3}\cdot(1-\gamma)+\gamma = \frac{1+2\gamma}{3}.
\) The overall expected value of $y$ is the average of these expected values weighted by total order probabilities (volumes fractions), namely  \(
\textstyle\mathbb{E}_{\calC}[y]=\frac{\alpha_1 \mu_1 +
	\alpha_2 \mu_2}{\alpha_1 + \alpha_2}\).}

\begin{toappendix}
\subsection{Marginal Distributions}

Beyond expected values, our results from Section~\ref{sec:brute} can serve to compute the
\emph{marginal distributions} of variables under the uniform distribution; these can be used for other estimation schemes of unknown variables (e.g., other moments of the variables, median values, etc.).

We now formalize this notion.
Letting $\calC$ be a constraint set, $x\in\calX\backslash \Xexact$ a variable
with no exact-value constraint, and $v\in[0,1]$ a value, we write
$\restr{\calC}{x=v}$ the \emph{marginalized} constraint set $\calC\cup\{x=v\}$.
{Note that, if the admissible polytope $\pw(\calC)$ is $d$-dimensional, then
	$\pw(\restr{\calC}{x=v})$ is $(d-1)$-dimensional.}
We can now define:

\begin{definition}
	\label{def:marginal}
	The \emph{marginal distribution} of an unknown variable $x\in\calX\backslash\Xexact$ is defined as the pdf $p_{x}(v)\colonequals V_{d-1}(\restr{\calC}{x=v}) / V_d(\calC)$.
\end{definition}

For unfragmented total orders, the distribution of unknown variables relates to the known problem of computing \emph{order statistics}:

\begin{definition}
	The \emph{$i$-th order statistic} for $n$ samples of a probability
	distribution $\Pr$ is the distribution $\Pr_i$ of the following process: draw $n$
	independent values according to $\Pr$, and return the $i$-th smallest value of
	the draw.
\end{definition}

\begin{proposition}[(\cite{gentle2009computational}, p.~63)]
	The distribution of the $i$-th order statistic for $n$ samples of the uniform distribution on
	$[0, 1]$ is the Beta distribution $B(i, n+1-i)$.
\end{proposition}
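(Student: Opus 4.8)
The plan is to identify the distribution $\pr_i$ in the statement with the law of the $i$-th smallest among $n$ independent uniform samples, and then to compute its density explicitly and recognize it as a Beta density. Writing $U_1,\dots,U_n$ for the i.i.d.\ draws from the uniform distribution on $[0,1]$ and $U_{(i)}$ for their $i$-th order statistic, the distribution $\pr_i$ is by definition the law of $U_{(i)}$, so it suffices to show that $U_{(i)}$ has density $f(v)=\frac{n!}{(i-1)!\,(n-i)!}\,v^{i-1}(1-v)^{n-i}$ on $[0,1]$, which I will then match against the Beta$(i,n+1-i)$ density.

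I would obtain this through the cumulative distribution function. The event $\{U_{(i)}\leq v\}$ holds exactly when at least $i$ of the samples fall in $[0,v]$; since each sample lands in $[0,v]$ independently with probability $v$, the count that does so is Binomial$(n,v)$, giving
\[
\pr[U_{(i)}\leq v]=\sum_{j=i}^{n}\binom{n}{j}v^{j}(1-v)^{n-j}.
\]
Differentiating in $v$ and using $\frac{\mathrm{d}}{\mathrm{d}v}\bigl(v^{j}(1-v)^{n-j}\bigr)=j\,v^{j-1}(1-v)^{n-j}-(n-j)\,v^{j}(1-v)^{n-j-1}$, the successive terms telescope, the negative part of the $j$-th term cancelling the positive part of the $(j+1)$-th term, so that only the boundary contribution at $j=i$ survives, namely $\binom{n}{i}\,i\,v^{i-1}(1-v)^{n-i}$. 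Simplifying $\binom{n}{i}\,i=\frac{n!}{(i-1)!\,(n-i)!}$ then yields the claimed density $f$.

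Finally I would confirm that $f$ is the Beta$(i,n+1-i)$ density by matching normalizing constants: the Beta$(a,b)$ density is $\frac{1}{B(a,b)}v^{a-1}(1-v)^{b-1}$ with $B(a,b)=\frac{\Gamma(a)\Gamma(b)}{\Gamma(a+b)}$, so for $a=i$ and $b=n+1-i$ we get $1/B(i,n+1-i)=\frac{\Gamma(n+1)}{\Gamma(i)\,\Gamma(n+1-i)}=\frac{n!}{(i-1)!\,(n-i)!}$ and exponents $a-1=i-1$, $b-1=n-i$, matching $f$ exactly. An equivalent route bypasses the CDF through a direct infinitesimal density argument: the probability that $U_{(i)}\in[v,v+\mathrm{d}v]$ requires $i-1$ samples below $v$, one in the infinitesimal window, and $n-i$ above, contributing the multinomial factor $\frac{n!}{(i-1)!\,1!\,(n-i)!}v^{i-1}(1-v)^{n-i}\,\mathrm{d}v$, recovering $f$ at once. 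The only step needing any care is the telescoping of the differentiated sum; this is routine bookkeeping and there is no real obstacle, the result being classical (cf.\ \cite{gentle2009computational}).
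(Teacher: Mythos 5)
Your proof is correct: the CDF computation, the telescoping after differentiation (using $\binom{n}{j}(n-j)=\binom{n}{j+1}(j+1)$), and the matching of the normalizing constant $\frac{n!}{(i-1)!\,(n-i)!}=1/B(i,n+1-i)$ are all accurate, and the infinitesimal multinomial argument you sketch as an alternative is equally valid. Note that the paper itself offers no proof of this proposition---it is imported verbatim as a known result from the cited textbook---so there is no in-paper argument to compare against; your derivation is simply the standard classical one that the citation stands in for, and it correctly fills that gap.
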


The connection to the marginal distribution of $x_i$ in $\calC_1^n(\alpha, \beta)$ is the following:

\begin{observation}\label{obs:n-order-statistic}
	The marginal distribution of $x_i$ within
	the admissible polytope of an unfragmented total order $\calC_1^n(\alpha, \beta)$ is the
	distribution of the $i$-th order statistic for $n$ samples of the uniform
	distribution on~$[\alpha, \beta]$.
\end{observation}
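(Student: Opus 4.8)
The plan is to identify the uniform distribution on the admissible polytope of $\calC_1^n(\alpha,\beta)$ with the joint law of the order statistics of $n$ i.i.d.\ uniform samples, and then simply compare $i$-th marginals on both sides. Write $\calC \defeq \calC_1^n(\alpha,\beta)$ for brevity. First I would observe that $\pw(\calC)$ is exactly the ordered simplex $S \defeq \{(v_1,\dots,v_n)\in[\alpha,\beta]^n : v_1 \leq \dots \leq v_n\}$, which is $n$-dimensional with $n$-volume $V_n(\calC) = \frac{(\beta-\alpha)^n}{n!}$ (this is \textup{VolumeFrag}, line~\ref{lin:fragvol}). Consequently the uniform pdf on $\pw(\calC)$ is the constant $\frac{n!}{(\beta-\alpha)^n}$ on $S$ and $0$ outside.

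Next I would introduce $n$ i.i.d.\ uniform variables $U_1,\dots,U_n$ on $[\alpha,\beta]$, whose joint density is the constant $\frac{1}{(\beta-\alpha)^n}$ on the cube $[\alpha,\beta]^n$, and analyze the sorting map. Up to the measure-zero set of points with ties, the cube splits into $n!$ simplices, one per permutation, and the permutation that sorts a given simplex maps it isometrically onto $S$. Because the uniform density on the cube is invariant under permuting coordinates, all $n!$ pieces are carried onto $S$ with equal weight, so the pushforward of the law of $(U_1,\dots,U_n)$ under sorting --- i.e.\ the joint law of the order statistics $(U_{(1)},\dots,U_{(n)})$ --- has density $\frac{n!}{(\beta-\alpha)^n}$ on $S$ and $0$ elsewhere. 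This is exactly the uniform pdf on $\pw(\calC)$, so the joint law of the order statistics \emph{equals} the uniform distribution on $\pw(\calC)$.

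It then remains to pass to marginals. On the probabilistic side, the $i$-th coordinate of $(U_{(1)},\dots,U_{(n)})$ is by definition distributed as the $i$-th order statistic. On the geometric side, I would check that the marginal of Definition~\ref{def:marginal}, namely $p_{x_i}(v) = V_{n-1}(\restr{\calC}{x_i=v}) / V_n(\calC)$, is genuinely the probabilistic marginal of the $i$-th coordinate under the uniform law: since the slicing hyperplanes $x_i=v$ are orthogonal to the $x_i$-axis (and projecting away $x_i$ is an isometry onto $\mathbb{R}^{n-1}$), Fubini integrating the constant density over the slice yields precisely its $(n-1)$-volume divided by the total $n$-volume. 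As the two joint laws coincide, their $i$-th marginals coincide, which is the claim.

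I expect the one point deserving care to be the pushforward computation --- checking that sorting sends the uniform law on $[\alpha,\beta]^n$ to the uniform law on $S$ with the correct factor $n!$ --- but this is the textbook derivation of the joint density of uniform order statistics and introduces no new difficulty. A final rescaling of $[\alpha,\beta]$ onto $[0,1]$ connects the statement to the Beta form $B(i,n+1-i)$ quoted above.
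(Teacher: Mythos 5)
Your proposal is correct and takes essentially the same route as the paper's proof: both rest on the permutation symmetry of the i.i.d.\ uniform law, namely that the cube $[\alpha,\beta]^n$ splits (up to the measure-zero tie set) into $n!$ equally weighted orderings, each of which carries exactly the uniform distribution on the ordered simplex $\pw(\calC_1^n(\alpha,\beta))$ after relabeling. Your write-up merely makes explicit the pushforward computation and the Fubini check identifying the geometric marginal of Definition~\ref{def:marginal} with the probabilistic one, details the paper's shorter argument leaves implicit.
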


\begin{proof}
	The distribution on $n$ uniform independent samples in $[\alpha, \beta]$ can
	be described as first choosing a total order for the samples, uniformly among
	all permutations of $\{1, \ldots, n\}$ (with ties having a probability of $0$
	and thus being neglected). Then, the distribution for each total order is
	exactly that of~$x_i$ when variables are relabeled.
\end{proof}

Note that the mean of this distribution corresponds to a linear interpolation of the unknown variables between $\alpha$ and $\beta$, which we use in section~\ref{sec:total}.

We summarize these findings, as follows:
\begin{proposition}\label{prp:total-order-marginal}
	Given $\calC$, a constraint set implying a total
	order, 
	the marginal distribution 
	of any variable $x_i\in\calX$ is a polynomial of degree at most $n$ and can be computed in PTIME.
\end{proposition}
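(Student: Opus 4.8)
The plan is to reduce the computation of the marginal distribution to the connection with order statistics of the uniform distribution that is already established in Observation~\ref{obs:n-order-statistic}, and then to exploit the product structure of fragmented total orders. First I would dispose of the degenerate case: if $x_i\in\Xexact$ its value is fixed and the marginal is a point mass, which falls outside Definition~\ref{def:marginal}; so I assume $x_i\in\calX\setminus\Xexact$. Since $\calC$ implies a (possibly fragmented) total order, $x_i$ lies in a unique fragment $\calC_{p+1}^{q-1}(v_p,v_q)$, where $x_p,x_q\in\Xexact$ are the nearest exact-value variables bounding $x_i$ from below and above, with the range bounds playing the role of implicit endpoints $x_0=0$ and $x_{n+1}=1$ when needed.

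The core observation is that marginalizing $x_i$ to a value $v$ interacts only with this one fragment. Applying the splitting correspondence of Lemma~\ref{lem:splitting} at each exact-value constraint, the admissible polytope $\pw(\calC)$ factorizes as a product of per-fragment polytopes, so that $V_d(\calC)$ is the product of the individual fragment volumes. Adding the constraint $x_i=v$ splits the fragment containing $x_i$ into two unfragmented sub-fragments — one from $v_p$ to $v$ carrying the $i-p-1$ variables below $x_i$, and one from $v$ to $v_q$ carrying the $q-i-1$ variables above it — while leaving every other fragment untouched. Hence in the ratio $p_{x_i}(v)=V_{d-1}(\restr{\calC}{x_i=v})/V_d(\calC)$ of Definition~\ref{def:marginal}, the volumes of all the remaining fragments cancel, and only the volume of $x_i$'s fragment and of its two sub-fragments survive.

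Using the volume formula for unfragmented total orders (the $\frac{(\beta-\alpha)^m}{m!}$ expression computed by \textup{VolumeFrag}), the left sub-fragment contributes $(v-v_p)^{i-p-1}/(i-p-1)!$ and the right one $(v_q-v)^{q-i-1}/(q-i-1)!$, while the undivided fragment contributes the $v$-independent factor $(v_q-v_p)^{q-p-1}/(q-p-1)!$. After cancellation this yields, up to a constant,
\[
p_{x_i}(v)\;\propto\;(v-v_p)^{\,i-p-1}\,(v_q-v)^{\,q-i-1},
\]
which is precisely the density of the $(i-p)$-th order statistic of $q-p-1$ uniform samples on $[v_p,v_q]$ — a rescaled Beta distribution — exactly as Observation~\ref{obs:n-order-statistic} predicts for the fragment viewed in isolation. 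The normalizing constant is the associated (rescaled) Beta-function value $\frac{(q-p-1)!}{(i-p-1)!\,(q-i-1)!\,(v_q-v_p)^{q-p-1}}$, which is computable in PTIME. The degree of this polynomial is $(i-p-1)+(q-i-1)=q-p-2$, one less than the number of unknown variables in the fragment, hence at most $n-1\leq n$, giving the claimed degree bound; the whole computation reduces to locating the enclosing fragment, reading off the two exponents, and evaluating the normalization.

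The main obstacle is the rigorous justification of the cancellation step: one must argue, by repeated application of the bijective correspondence of Lemma~\ref{lem:splitting}, that both $\pw(\calC)$ and $\pw(\restr{\calC}{x_i=v})$ decompose as products over fragments with the only $v$-dependence confined to the two sub-fragments straddling $x_i$. The degenerate boundary cases $i=p+1$ or $i=q-1$, where one sub-fragment is empty, should also be checked, but these are handled uniformly by the conventions $0!=1$ and $(\cdot)^0=1$, so that the corresponding factor collapses to the constant $1$ and the formula remains valid.
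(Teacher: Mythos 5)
Your proposal is correct, and its overall skeleton matches the paper's: reduce to the fragment containing $x_i$ via the splitting lemma, then identify the within-fragment marginal as a rescaled Beta density of degree $q-p-2 \leq n$. The one place where you genuinely diverge is how that density is justified. The paper obtains it probabilistically: Observation~\ref{obs:n-order-statistic} shows, by a symmetry/relabeling argument, that the marginal of $x_i$ in an unfragmented total order coincides with the $i$-th order statistic of $n$ i.i.d.\ uniform samples, and then the Beta form is cited from the order-statistics literature. You instead compute the density geometrically, as the ratio $V_{d-1}(\restr{\calC}{x_i=v})/V_d(\calC)$ of Definition~\ref{def:marginal}, in which the volumes of all fragments other than the one containing $x_i$ cancel, and what remains is an explicit quotient of simplex volumes $\frac{(\beta-\alpha)^m}{m!}$. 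Your route is more self-contained: it yields the explicit polynomial, its normalization constant, and the degree bound without appealing to outside results, and it handles the boundary cases $i=p+1$, $i=q-1$ uniformly. The paper's route buys the connection to order statistics, which it reuses (the mean of the Beta distribution is what justifies the linear-interpolation formula of Section~\ref{sec:total}). Finally, the cancellation step you flag as the main obstacle is indeed the crux, but it is exactly what Lemma~\ref{lem:splitting} (and its generalization, Lemma~\ref{lem:splittingb}) provides: the bijection there is coordinate-preserving, hence volume-preserving, so the admissible polytope's volume is the product of the per-fragment volumes both before and after adding the constraint $x_i=v$.
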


Algorithm~\ref{alg:brute} can be adapted 
to compute marginal distributions, by replacing line~\ref{lin:compexp} to
compute instead the marginal distribution of~$x$ in~$\calT$ on
variables $x_{i_j} \ldots x_{i_{j+1}}$: according to
Observation~\ref{obs:n-order-statistic}, this marginal distribution is
polynomial and defined on the range $[v_{i_j}, v_{i_{j+1}}]$. We modify
line~\ref{lin:sumexp} to compute the sum of the marginal distributions instead,
seeing them as \emph{piecewise-polynomial} functions (which are zero outside of
their range), still weighted by the volume of~$\calT$. We deduce that the
marginal distribution of~$x$ in~$\calC$ (computed by modifying
line~\ref{lin:last}) is a \deft{piecewise-polynomial} function with at most $\card{\calX}$ pieces and degree at most $\card{\calX}$.

\end{toappendix}

\section{Hardness and Approximations}\label{sec:complexity}

We next show that the intractability of Algorithm~\ref{alg:brute} in
Section~\ref{sec:brute} is probably unavoidable. We first show matching lower
bounds for interpolation and top-$k$ in Section~\ref{sec:hardness}.
We then turn in Section~\ref{sec:approx} to the problem of approximating
expected values.

\subsection{Hardness of Exact Computation}\label{sec:hardness}
\begin{toappendix}
  \subsection{Hardness of Exact Computation (Section~\ref{sec:hardness})}
  \label{apx:sec:hardness}
\end{toappendix}
We now analyze the complexity of computing an exact solution to our two main problems. We show below a new result for the hardness of top-$k$. But first, we state the lower bound for the interpolation problem, which is obtained via the geometric characterization of the problem. In previous work, centroid computation is proven to be hard for \emph{order polytopes}, namely, polytopes without exact-value
constraints, which are a particular case of our setting:

\begin{theorem}\emph{\textbf{\textsf{(\cite{rademacher2007approximating}, Theorem~1).}}}
  \label{thm:expecthard}
  Given a set\/ $\calC$ of order constraints 
  and $x \in \calX$, determining the expected value
  of $x$ in $\pw(\calC)$ under the uniform distribution is
$\fpsp$-hard.
\end{theorem}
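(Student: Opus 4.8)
The plan is to reduce from a known $\sharpp$-hard counting problem, using the geometric reading of the expected value as a centroid coordinate of the order polytope $\pw(\calC)$. The starting point is the classical identity that, for an \emph{order polytope} (i.e.\ $\calC$ containing only order constraints, with all variables in $[0,1]$), the $n$-volume $\pvol(\calC)$ equals $e(\calC)/n!$, where $e(\calC)$ is the number of linear extensions of the partial order and $n\defeq\card{\calX}$. This is immediate from Algorithm~\ref{alg:brute}: since there are no exact-value constraints, every linear extension is a single fragment of volume $1/n!$, and ties are negligible, so $\pvol(\calC)=e(\calC)/n!$. Counting linear extensions is $\sharpp$-complete, hence computing $\pvol(\calC)$ for order polytopes is already $\sharpp$-hard. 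It therefore suffices to reduce volume computation to the computation of a single expected value, while staying inside the class of order polytopes.

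First I would record the explicit form of the expected value for order polytopes. Aggregating over linear extensions as in Section~\ref{sec:general}, and using that each extension carries equal volume $1/n!$, one gets
\[
\mathbb{E}_\calC[x_i]=\frac{1}{(n+1)\,e(\calC)}\sum_{\calT}\mathrm{pos}_\calT(x_i),
\]
where $\mathrm{pos}_\calT(x_i)$ is the rank of $x_i$ in the linear extension $\calT$. Writing $\mathrm{pos}_\calT(x_i)=1+\card{\{j\neq i : x_j <_\calT x_i\}}$ shows that $\mathbb{E}_\calC[x_i]$ is an affine function of the precedence counts $N_{ji}\defeq\card{\{\calT : x_j <_\calT x_i\}}$, normalised by $e(\calC)$. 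This already pinpoints the difficulty: the expected value is a \emph{ratio}, so the $\sharpp$ quantity of interest sits under a normalisation by $e(\calC)$ and is not directly exposed by one query.

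The core of the proof is then a gadget that defeats this normalisation. Given a poset whose volume (equivalently, linear-extension count) I want to compute, I would build an augmented order polytope in one extra dimension via a stacking/cone construction: the new polytope splits into two regions, one built from the hard instance and one of a volume computable in PTIME, arranged so that the centroid coordinate along the new axis equals a ratio $\pvol(A)/\bigl(\pvol(A)+\pvol(B)\bigr)$. A single exact (rational) evaluation of this expected value then yields a linear equation that I can solve in PTIME for the unknown volume, completing the reduction. The main obstacle, and the delicate part of Rademacher's argument~\cite{rademacher2007approximating}, is exactly this gadget: one must keep the whole construction expressible with order constraints only (no exact-value constraints are permitted in an order polytope), while still isolating the target volume from the global normalisation and keeping every quantity an exact rational, so that the whole thing is a genuine polynomial-time Turing reduction with a $\sharpp$ oracle and thus witnesses $\fpsp$-hardness.
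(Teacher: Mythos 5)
Your opening is on target and in fact coincides with the paper's own material: the identity you display for $\mathbb{E}_\calC[x_i]$ is precisely the paper's Lemma~\ref{lem:ranktoexpect} (for pure order constraints, the expected value equals the expected rank divided by $n+1$), and counting linear extensions, $\sharpp$-complete by~\cite{brightwell1991counting}, is the right hardness source. Note, however, that the paper never proves Theorem~\ref{thm:expecthard} itself: it imports it verbatim from~\cite{rademacher2007approximating}, and its appendix observes that Lemma~\ref{lem:ranktoexpect} combined with the known $\sharpp$-hardness of computing expected ranks~\cite{brightwell1991counting} already constitutes an alternative proof. Having derived that identity yourself, you were one line away from this finish: an expected-value oracle \emph{is} an expected-rank oracle (up to the factor $n+1$), so the hardness of ranks transfers immediately, with no geometric gadget at all.

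The genuine gap is the gadget, which you both leave unconstructed and specify in a form that cannot be realized. The weighted-centroid identity $\pvol(\calC)\,c_{\calC} = \pvol(A)\,c_A + \pvol(B)\,c_B$ is homogeneous in the volumes, so any single gluing construction can only ever yield a volume \emph{ratio} between pieces; to pin down an absolute volume you need a piece whose absolute volume \emph{and} centroid are both available. Inside order polytopes this is unobtainable in one step: every piece carved out by order constraints is the order polytope of a refinement of the input poset, so computing its volume is again an instance of the same hard problem, while any cut that is not an order constraint produces pieces on which the oracle cannot be invoked, leaving the hard piece's centroid as a second unknown in your single equation. (Moreover, the new-axis coordinate of a centroid is the volume-weighted average of the pieces' centroid coordinates, not the fraction $\pvol(A)/(\pvol(A)+\pvol(B))$ you posit.) The reduction that does work chains ratios instead of anchoring on an easy piece: for incomparable $x,y$, the polytope $\pw(\calC)$ decomposes into $\pw(\calC\cup\{x\leq y\})$ and $\pw(\calC\cup\{y\leq x\})$; applying the identity to the functional $\mathbb{E}[x]-\mathbb{E}[y]$, which is strictly negative on the first piece and strictly positive on the second, three oracle calls recover the precedence probability $\lambda = \pvol(\calC\cup\{x\leq y\})/\pvol(\calC)$ exactly as a rational. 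Adding one relation at a time along a fixed linear extension of $\calC$ and multiplying the resulting $\lambda$'s telescopes down to the total order, whose polytope has known volume $1/n!$, so $e(\calC)$, and hence $\pvol(\calC)$, is computed with $O(n^2)$ oracle calls. Either that telescoping argument or the paper's rank-hardness citation closes the hole; as written, your proposal re-cites the very result to be proved.
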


We now show a new lower bound for top-$k$ queries: interestingly, these queries are $\fpsp$-hard
\emph{even if they do not have to return the expected values}. Recall that
$\sigma$ is the selection operator (see Section~\ref{sec:deftopk}),
which we use to compute top-$k$ among a restricted subset of variables. We can
show hardness even for top-$1$ queries, and even when $\sigma$ only selects two
variables:

\begin{toappendix}
	The following lemma provides a direct connection between the expected value of a variable to its expected rank, for constraint sets that consist only of order constraints. By the hardness of computing expected rank in partial orders~\cite{brightwell1991counting} this provides an alternative proof to the hardness of computing expected value (Theorem~\ref{thm:expecthard}).
	We use this connection to obtain an upper bound on the denominator of the expected value, when we reduce to the problem of top-$k$ computation below.
	
\begin{lemma}\label{lem:ranktoexpect}
	Given $\calX$ and a set of order constraints $\calC$ over $\calX$, if the expected rank of $x\in\calX$ is $r$ then its expected value is \(\textstyle\frac{r}{n+1}\).
\end{lemma}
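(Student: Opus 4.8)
The plan is to reduce the claim to a per-extension computation using the machinery of Section~\ref{sec:brute}, exploiting the fact that $\calC$ contains \emph{only} order constraints. Write $n \defeq \card{\calX}$. Since there are no exact-value constraints apart from the implicit bounds $0$ and $1$, every linear extension $\calT$ of $\calC$ is an \emph{unfragmented} total order of the form $\calC_1^n(0,1)$, i.e.\ $0 \leq x_{\pi(1)} \leq \cdots \leq x_{\pi(n)} \leq 1$. By the volume formula used in $\textup{VolumeFrag}$ (line~\ref{lin:fragvol}), each such $\calT$ has volume $\pvol(\calT) = \frac{(1-0)^n}{n!} = \frac{1}{n!}$, which is \emph{the same for all extensions}. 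Hence, under the uniform distribution, every linear extension is equally likely, each carrying probability $1/L$, where $L$ is the number of linear extensions of $\calC$.

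Next I would relate rank and value inside a single extension. Fix a linear extension $\calT$ in which $x$ occupies position $k$, i.e.\ $x$ is the $k$-th smallest variable. Almost surely (ties have probability $0$ by Lemma~\ref{lem:noties}) each possible world belongs to exactly one linear extension, and in the worlds of $\calT$ the rank of $x$ is constant and equal to $k$. By linear interpolation on the single fragment spanning all $n$ variables between $0$ and $1$ (Proposition~\ref{prp:total-order-no-exact}, via $\textup{ExpectedValFrag}$), the expected value of $x$ within $\calT$ is $\mathbb{E}_\calT[x] = \frac{k}{n+1}$. Thus, within every extension, $\mathbb{E}_\calT[x] = \frac{1}{n+1}\cdot(\text{rank of }x\text{ in }\calT)$.

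It then remains to average over extensions. Since all $\pvol(\calT)$ are equal, the overall expected value is the unweighted mean $\mathbb{E}_\calC[x] = \frac{1}{L}\sum_\calT \mathbb{E}_\calT[x] = \frac{1}{n+1}\cdot\frac{1}{L}\sum_\calT (\text{rank of }x\text{ in }\calT)$. Because the rank of $x$ is constant on the worlds of each equally likely extension, the inner average $\frac{1}{L}\sum_\calT (\text{rank of }x\text{ in }\calT)$ is exactly the expected rank $r$ of $x$ under the uniform distribution. Combining gives $\mathbb{E}_\calC[x] = \frac{r}{n+1}$, as desired.

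The main obstacle, and the only place the hypothesis ``order constraints only'' is essential, is the equal-volume claim: it is precisely the absence of interior exact-value constraints that forces every extension to be a single fragment $\calC_1^n(0,1)$ of volume $\frac{1}{n!}$, making the weights $\pvol(\calT)$ cancel and aligning the expected value with the normalized expected rank. With exact-value constraints present, extensions would have differing fragment volumes and this clean identity would fail. The remaining care is the bookkeeping around ties, which is already handled: by Lemma~\ref{lem:noties} the set of worlds with ties has probability $0$, so ranks are well-defined almost everywhere and each world is assigned to a unique linear extension.
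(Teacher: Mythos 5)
Your proof is correct and follows essentially the same route as the paper's: decompose over linear extensions, observe that with only order constraints every extension is equally likely, use per-extension linear interpolation to get $\frac{k}{n+1}$, and average to obtain $\frac{r}{n+1}$. The only cosmetic difference is that you justify the equal probabilities by explicitly computing each extension's volume as $\frac{1}{n!}$ via the fragment formula, where the paper simply invokes symmetry.
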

\begin{proof}
	Let $N$ be the number of linear extensions of $\calC$.
	Since $\calC$ only consists of order constraints, by symmetry, the probability of each of its linear extensions is identical, namely $1/N$. Denote by $r_i$ the rank of $x$ in the linear extension $\calT_i$. Then $r= \sum_{i=1}^{N}\frac{1}{N} r_i$. By Algorithm~\ref{alg:brute}, the expected value of $x$ relative to $\calT_i$ is $\frac{r_i}{n+1}$, and thus the overall expected value of $x$ relative to $\calC$ is $\mathbb{E}_{\calC}[x]=\sum_{i=1}^{N}\frac{1}{N} \frac{r_i}{n+1}=\frac{r}{n+1}$.
\end{proof}
\end{toappendix}

\begin{theoremrep}\label{thm:topkhard}
Given a constraint set $\calC$ over~$\calX$, a selection
  predicate~$\sigma$, and an integer~$k$, the top-$k$ computation problem
  over $\calX$, $\calC$ and $\sigma$ is $\fpsp$-hard  even if $k$ is
  fixed to be~$1$, $|\calX_\sigma|$~is~$2$, and the top-$k$ answer does not include the expected value of the variables.
\end{theoremrep}

\begin{proof}
  We will perform a reduction from the interpolation problem (i.e., expected value
  computation) on sets of order constraints, which is $\fpsp$-hard by Theorem~\ref{thm:expecthard}. Write $n
  \defeq \card{\calX}$. Assume using Lemma~\ref{lem:noties} that the
  probability of ties in~$\pw(\calC)$ is zero.

  We first observe, using Lemma~\ref{lem:ranktoexpect}, that
  the expected value $v$ of $x_i$ can be written as $\frac{1}{N (n+1)}
  \sum_{\calT} r_{\calT}$, where the sum is over all the linear
  extensions $\calT$ of $\calC$, $r_{\calT} \in \{1, \ldots, n\}$ is the rank
  of $x_i$ in the linear extension $\calT$, and $N \leq n!$ is the number of linear
  extensions. This implies that $v$ can be written as a rational $p/q$
  with $0 \leq p \leq q$ and $0 \leq q \leq M$, where we write $M \defeq (n+1)!$.

  We determine this fraction $p/q$ using the algorithm
  of~\cite{papadimitriou1979efficient}, that proceeds by making queries of the
  form ``is $p/q \leq p'/q'$'' with $0 \leq p', q' \leq M$, and runs in time
  logarithmic in the value $M$, so polynomial in the input $\calC$. To do so, we must
  describe how to decide in PTIME whether $v \leq p'/q'$ for $0 \leq p', q' \leq
  M$, using an oracle for the top-$1$ computation problem that does not return the expected values.

  Fix $v' = p'/q'$ the query value and let $v = p/q$ be the unknown target value, the
  expected value of $x_i$. We illustrate how to decide whether $v \leq v'$. The
  general idea is to add a variable with exact-value constraint to $v'$ and
  compute the top-$1$ between $x_i$ and the new variable, but we need a slightly
  more complicated scheme because the top-$1$ answer variable can be arbitrary in the
  case where $v = v'$ (i.e., we have a tie in computing the top-$1$).
  Let $\epsilon \defeq 1/(2(M^2+1))$, which is computable in PTIME in the value
  of $n$ (so in PTIME in the size of the input $\calC$).
  Construct $\calC'$ (resp., $\calC'_-$, $\calC'_+$) by adding an exact-value
  constraint $x' = v'$ (resp., $x'_- = v' - \epsilon$, $x'_+ = v' + \epsilon$)
  for a fresh variable $x'$ (resp., $x'_-$, $x'_+$).
  Now use the oracle for $\calC'$ (resp., $\calC'_-$, $\calC'_+$) and the
  selection predicate that selects $x_i$ and $x'$ (resp., $x'_-$, $x'_+$), taking $k = 1$ in all
  cases. The additional variables do not affect the expected value of $x_i$ in
  $\calC'$, $\calC'_+$, and $\calC'_-$, so it is also $v$ in them.
  Further, we know that $v = p/q$, $v' = p'/q'$, with $0 \leq p, q, p', q' \leq M$, hence either $v = v'$, or
  $\left|v - v'\right| \geq \frac{1}{M^2}$. Hence, letting $S = \{v',
  v'-\epsilon,v'+\epsilon\}$, there are three
  possibilities: $v = v'$, $v < v''$ for all $v'' \in S$, or $v > v''$ for all
  $v'' \in S$. Thus, if the top-$1$ variable in all oracle calls is always $x_i$
  (resp., never $x_i$), then we are sure that $v > v'$ (resp., $v < v'$). If
  some oracle calls return $x_i$ but not all of them, we are sure that $v = v'$.
  Hence, we can find out in PTIME using the oracle whether $v \leq v'$.
  This concludes the proof, as we then have an overall PTIME reduction from the
  $\fpsp$-hard problem of interpolation (i.e., expected value computation) to the top-$1$ computation
  problem, showing that the latter is also $\fpsp$-hard.
\end{proof}

\begin{toappendix}
  We conclude with a proof of the result variant stated informally after
  Theorem~\ref{thm:topkhard}:

  \begin{theorem}
    Given a constraint set $\calC$ over~$\calX$ and an integer~$k$, the top-$k$
    computation problem over $\calX$ and $\calC$
    is $\fpsp$-hard even for the trivial selection predicate $\sigma$ (with
    $\calX_\sigma = \calX$) and even when the top-$k$ answer does not include
    the expected value of the variables.
  \end{theorem}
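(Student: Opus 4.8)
The plan is to adapt the reduction from the proof of Theorem~\ref{thm:topkhard}, replacing the use of the selection predicate by an inspection of the full ordered output list, which the top-$k$ semantics makes available. As before, I would reduce from the interpolation problem on a set of order constraints, which is $\fpsp$-hard by Theorem~\ref{thm:expecthard}, and invoke Lemma~\ref{lem:noties} to assume that ties have probability zero. Writing $n \defeq \card{\calX}$ and $M \defeq (n+1)!$, Lemma~\ref{lem:ranktoexpect} again shows that the target expected value $v$ of the variable $x_i$ is a rational $p/q$ with $0 \leq p \leq q \leq M$, so it can be recovered by the binary-search procedure of~\cite{papadimitriou1979efficient}, which issues $\oofb{\log M}$ queries of the form ``is $v \leq p'/q'$?'' with $0 \leq p', q' \leq M$. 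It therefore suffices to implement each such comparison in PTIME using the trivial-selection top-$k$ oracle.

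To decide whether $v \leq v'$ for $v' = p'/q'$, I would reuse the three-point scheme with $\epsilon \defeq 1/(2(M^2+1))$, constructing $\calC'$, $\calC'_-$, $\calC'_+$ by adding a single fresh, order-unconstrained variable fixed by an exact-value constraint to $v'$, $v' - \epsilon$, $v' + \epsilon$ respectively. Since the fresh variable forms its own connected component of the order, it does not affect the marginal of $x_i$, so $x_i$ still has expected value $v$ in each modified instance, while the fresh variable has expected value exactly its assigned constant. The key new ingredient is how to read off the comparison \emph{without} a selection predicate: I would call the oracle with the trivial selection ($\calX_\sigma$ equal to the whole augmented variable set) and with $k$ equal to the total number of variables, so that the output is the \emph{entire} list of variables sorted by expected value. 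Because this list is ordered and two distinct expected values can never tie, I can simply locate $x_i$ and the fresh variable in it and read their relative order: $x_i$ precedes the fresh variable iff its expected value strictly exceeds the fresh constant. Note that this only uses variable identities and their positions, never the expected values themselves, as required by the statement.

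The correctness of the tie handling is then identical to Theorem~\ref{thm:topkhard}: since either $v = v'$ or $\card{v - v'} \geq 1/M^2 > 2\epsilon$, the three calls behave consistently. If $x_i$ precedes the fresh variable in all three lists, then $v$ exceeds each of $v'$, $v' - \epsilon$, $v' + \epsilon$, and I conclude $v > v'$ (answer ``no''); otherwise I answer ``yes'', which covers both $v < v'$ (where $x_i$ follows the fresh variable in all three calls) and the tie case $v = v'$ (where the $\calC'_-$ and $\calC'_+$ calls disagree). As each comparison uses three oracle calls and the binary search performs $\oofb{\log M}$ of them, the overall reduction runs in PTIME, establishing $\fpsp$-hardness of top-$k$ under trivial selection even when expected values are not returned.

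The only delicate point I expect is confirming that reading off relative order from the sorted output is sound despite arbitrary tie-breaking among the \emph{other} variables. This holds because a strict inequality $\mathbb{E}_{\calC'}[x_i] > v'$ forces $x_i$ to appear before the fresh variable in every valid ordering, irrespective of how unrelated ties elsewhere are resolved; conversely a strict inequality in the other direction forces the opposite order, and equality is the only situation in which their relative position is unconstrained, which is precisely what the $\epsilon$-perturbations are designed to detect. Everything else is a routine transcription of the previous argument.
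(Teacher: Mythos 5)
Your proposal is correct and follows essentially the same route as the paper's proof: a reduction from interpolation on order constraints (Theorem~\ref{thm:expecthard}), using Lemma~\ref{lem:ranktoexpect} to bound the denominator of the target value, the rational-identification scheme of~\cite{papadimitriou1979efficient} to drive the comparisons, and the three-point $\epsilon$-perturbation trick to implement each comparison with the trivial-selection oracle. The only difference is mechanical: the paper extracts the comparison by binary-searching over $k$ to find the minimal ranks at which $x_i$ and the fresh variable appear, whereas you make a single call with $k = \card{\calX}$ and read their relative positions in the full ordered list --- an equivalent (and arguably slightly cleaner) way of getting the same rank information.
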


  \begin{proof}
    We prove the result exactly like Theorem~\ref{thm:topkhard}, except that,
    instead of computing the top-$1$ between two variables $x_i$ and some $x'
    \in \Xexact$,
    we compute the top-$k$ over all variables for different values of~$k$. Note that we can assume that the expected value $v$ of~$x_i$ is not equal
    to the exact value $v'$ of~$x'$ (otherwise, we can resolve it using the technique of Theorem~\ref{thm:topkhard}). Under this assumption, we have $v < v'$ iff the smallest $k$
    such that $x$ occurs in the top-$k$ is less than the smallest $k'$ such that
    $x'$ occurs in the top-$k'$, and likewise $v > v'$ whenever $k > k'$. $k$ and $k'$ can be found e.g.\ using binary search.
  \end{proof}
\end{toappendix}

\begin{proofsketch}
To prove hardness in this case, we reduce from interpolation. We show that a
top-1 computation oracle can be used as a comparison oracle to compare the
expected value of a variable $x$ to any other rational value $\alpha$, by adding
  a fresh element $x'$ with an exact-value constraint to~$\alpha$ {and
  using $\sigma$ to compute the top-1 among $\{x,x'\}$}. What is more
technical is to show that, given such a comparison oracle, we can perform the
reduction and determine \emph{exactly} the expected value $v$ of $x$ (a rational
number) using only a \emph{polynomial} number of comparisons to other rationals.
  This follows from a bound on the denominator of~$v$, and by applying the
  rational number identification scheme of~\cite{papadimitriou1979efficient}.
  See the Appendix.
\end{proofsketch}

{In settings where we do not have a selection operator (i.e.,
$\calX_\sigma = \calX$), we can similarly show the
hardness of top-$k$ (rather than top-$1$). See Appendix~\ref{apx:sec:hardness} for details.}

\subsection{Complexity of Approximate Computation}\label{sec:approx}
\begin{toappendix}
  \subsection{Complexity of Approximate Computation (Section~\ref{sec:approx})}
\end{toappendix}

In light of the previous hardness results, we now review approximation
algorithms, again via the geometric characterization of our setting. In
Section~\ref{sec:tractable}, we will show a novel exact solution in PTIME for specific cases.

The \emph{interpolation} problem can be shown to admit a
fully polynomial-time randomized approximation scheme (FPRAS). This result
follows from existing work \cite{kannan1997random,bertsimas2004solving}, using a tractable almost uniform sampling scheme
for convex bodies.

\begin{propositionrep}\emph{\textbf{\textsf{(\cite{kannan1997random}, Algorithm~5.8).}}}
  \label{prp:fpras}
  Let $\calC$ be a set of constraints with variable set $\calX$ and
  $x\in\calX$. There is an
  FPRAS that determines an estimate $\hat{\mathbb{E}}_\calC[x]$ of the
  expected value~$\mathbb{E}_\calC[x]$ of $x$ in $\pw(\calC)$ under the
  uniform distribution.
\end{propositionrep}

\begin{toappendix}
  For clarity and completeness, we give a self-contained proof of this result,
  using the almost uniform sampling scheme of~\cite{kannan1997random}:
\end{toappendix}

\begin{proof}
We use a result by Kannan, Lov\'asz, and
Simonovits~(Theorem~2.2 of~\cite{kannan1997random})
that shows that sampling a point \emph{almost uniformly} from a convex
body in dimension $n$
can be done by
$\tilde O(n^5)$ calls to an oracle deciding membership in the body, and an
additional factor of $\tilde O(n^2)$ arithmetic operations (here, $\tilde
O(\cdot)$ is the \emph{soft-O} notation, that ignores polylogarithmic factors). 
``Almost uniformly'' means that the \emph{total variation distance} between the
uniform distribution and the actual distribution realized by the sampling
is less than any fixed number $\epsilon'>0$. The dependency of the
  running time in $\epsilon'$ is logarithmic, hence hidden in $\tilde O(\cdot)$.

Let $\epsilon,\delta>0$ be two reals. For some number $N$ that we define
further, we first apply $N$ times consecutively the sampling algorithm
of~\cite{kannan1997random} with $\epsilon'\defeq\frac\epsilon 4$ to
  obtain a set $P = \{p_1, \ldots, p_N\}$ of~$N$ independent samples.
The algorithm uses an oracle for membership to the polytope, which in our
setting can be determined in time $O(|\calC|)$; hence, as
the dimension
of $\pw(\calC)$ is bounded by $|\calX|$, by the complexity analysis
of~\cite{kannan1997random}, this process has a running time
of $O(N\times|\calX|^7\times|\calC|)$.

We then consider the projection of every point $p_i \in P$ to the dimension
defined by the variable $x$, giving $N$ values $v_1,\dots, v_N$. We
then compute the estimate $\hat{\mathbb{E}}_\calC[x]$ as $\frac 1
N\sum_{i=1}^N v_i$. Clearly this has no impact on the running time. It only
remains to show that this estimate satisfies the given bounds, i.e., that 
  \(
  \left|\hat{\mathbb{E}}_\calC[x]-\mathbb{E}_\calC[x]\right|\leq\epsilon\)
  with probability at least $1-\delta$.

Let $f:\pw(C)\to[0;1]$ be
the probability density function of the distribution realized by the
algorithm of~\cite{kannan1997random}, according to which the independent samples
were drawn.
By definition of the total variation distance, we know that
$\frac 1 2\int_{\pw(\calC)} \left|\frac{1}{\pvol(\calC)}-f(p)\right|\d p\leq\frac{\epsilon}{4}$.
If we denote by $\mu_v$ the average of the (independent)
$x$-value samples $v_1,\dots, v_N$, and we denote by $f_{x=t}$ the function obtained
from~$f$ by setting to~$t$ the coordinate corresponding to~$x$, we have:
\begin{align*}
  \left|\mu_v-\mathbb{E}_\calC[x]\right|&=\left|\int_0^1
  t\int_{\pw(\calC_{x=t})}
  f_{x=t}(p)\d p\d t \quad - \quad \int_0^1 t\int_{\pw(\calC_{x=t})}
  \frac{1}{\pvol(\calC)}\d p \d t\right|
  \\&=
  \left|\int_0^1 t\int_{\pw(\calC_{x=t})}
  \left(f_{x=t}(p)-\frac{1}{\pvol(\calC)}\right)\d p\d t\right|
  \\&\leq
  \int_0^1 1\times\int_{\pw(\calC_{x=t})}
  \left|\frac{1}{\pvol(\calC)}-f_{x=t}(p)\right|\d p\d t
  \\&= \int_{\pw(\calC)}\left|\frac{1}{\pvol(\calC)}-f(p)\right|\d p\leq\frac\epsilon 2.
\end{align*}

Now, as the $v_i$ are independent and identically distributed, we obtain
by Hoeffding's inequality~\cite{Hoeffding63}:
\[
\Pr\left(\left|\hat{\mathbb{E}}_\calC[x]-\mu_v\right|\geq\frac\epsilon
2\right)
\leq
2\exp\left(\frac {-\epsilon^2 N}{2}\right).
\]

Therefore, setting $N\defeq \frac{2\ln\left(\frac{2}{\delta}\right)}{\epsilon^2}$, we have:
\[
  \Pr\left(\Big|\hat{\mathbb{E}}_\calC[x]-\mathbb{E}_\calC[x]\Big|\leq\epsilon\right)\geq
  \Pr\left(\Big|\hat{\mathbb{E}}_\calC[x]-\mu_v\Big|\leq\frac\epsilon
  2 \land \Big|\mu_v-\mathbb{E}_\calC[x]\Big|\leq\frac\epsilon
2\right)\geq1-\delta. \qedhere
\]
\end{proof}

This result is mostly of theoretical interest, as
the polynomial
is in~$\card{\calX}^7$ (see \cite{bertsimas2004solving}, Table~1),
but
recent improved sampling algorithms~\cite{LovaszV06}
may ultimately yield a practical
approximate interpolation technique for general constraint sets (see~\cite{LovaszD12,GeM15}).

For completeness, we mention two natural ways to
define randomized approximations for 
\emph{top-$k$ computation}:

\begin{itemize}
  \item We can define the \emph{approximate top-$k$} as an ordered list
    of $k$ items whose expected value does not differ by more than some
    $\epsilon>0$ from that of the item in the actual top-$k$ at the same
    rank. An FPRAS for this definition of approximate top-$k$
    can be
    obtained from that of
    Proposition~\ref{prp:fpras}.

  \item \label{itm:topapprox} It is highly unlikely that there exists a PTIME
    algorithm to return the \emph{actual top-$k$} with high probability, even without
    requiring it to return the expected values. Indeed, such an algorithm would be in the BPP
  (bounded-error probabilistic time)
    complexity class; yet it follows from Theorem~\ref{thm:topkhard} above that
    deciding whether a set of variables is the top-$k$ is NP-hard, so the
    existence of the algorithm would entail that $\text{NP} \subseteq \text{BPP}$. 
\end{itemize}

\section{Tractable Cases}\label{sec:tractable}
Given the hardness results 
in the previous section
and the impracticality of approximation,
we now study whether
\emph{exact} interpolation and top-$k$ computation can be tractable
on restricted classes of constraint sets.
We consider tree-shaped constraints (defined formally below) and generalizations
thereof: they are relevant for practical applications (e.g., classifying items
into tree- or forest-shaped taxonomies), and we will show that our problems are
tractable on them. 
We start by a splitting lemma 
to decompose constraint sets into ``independent'' subsets of variables,
and then define and study our tractable class.

\subsection{Splitting Lemma}\label{sec:splitting}
\begin{toappendix}
  \subsection{Splitting Lemma (Section~\ref{sec:splitting})}
  \label{apx:splitting}
\end{toappendix}
We will formalize the cases in which the valuations of two variables in $\calX$
are probabilistically dependent (the variables \emph{influence} each other),
according to $\calC$. This, in turn, will enable us to define \emph{independent
subsets of the variables} and thus \emph{independent subsets of the constraints}
over these variables. This abstract result will generalize the notion of fragments from total orders (see Section~\ref{sec:total}) to general constraint sets.
In what follows, we use $x_i \prec x_j$ to denote the \emph{covering relation}
of the partial order~$\leq$, i.e., $x_i \leq x_j$ is in $\calC$ but there exists no $x_k \notin
\{x_i, x_j\}$ such that $x_i \leq x_k$ and $x_k \leq x_j$ are in $\calC$.

\begin{definition}
  {We define the \emph{influence relation} $x \leftrightarrow y$
  between variables of $\calX \backslash \Xexact$ as the equivalence relation
  obtained by the symmetric, reflexive, and transitive closure of the $\prec$
  relation on $\calX \backslash \Xexact$.}
  
  {The \emph{uninfluenced classes} of $\calX$ under $\calC$ is the
  partition of $\calX\backslash\Xexact$ as the subsets $\mathsf{X}_1, \dots,
  \mathsf{X}_m$ given by the equivalence classes of the influence relation.}
  
  {The \emph{uninfluence decomposition of $\calC$} is the collection
  of constraint sets $\calC_1,\dots,\calC_m$ of $\calC$ where each $\calC_i$
  has as variables $\mathsf{X}_i \sqcup \Xexact$ and contains all exact-value
  constraints of~$\calC$ and all order constraints between variables of
  $\mathsf{X}_i \sqcup \Xexact$.}
\end{definition}

We assume w.l.o.g.\ that $m>0$, i.e., there are unknown variables in
$\calX\backslash\Xexact$; otherwise the uninfluence decomposition is meaningless but any analysis is trivial.
Intuitively, two unknown variables $x,x'$ are in different
uninfluenced classes if in \emph{every} linear extension there is
\emph{some} variable from $\Xexact$ between them, or if they belong to disconnected (and thus incomparable) parts of the partial order. 
In particular, uninfluenced classes
correspond to the fragments of a total order: this is used in
Section~\ref{sec:total}.
The uninfluence
decomposition captures only constraints between \emph{variables that influence
each other}, and constraints that can \emph{bound the range of a variable} by
making it comparable to variables from~$\Xexact$.
We formally prove the independence of $\calC_1,\dots,\calC_m$ via
possible-world semantics: every possible world of $\calC$ can be decomposed to possible worlds of  $\calC_1,\dots,\calC_m$, and vice versa. 

\begin{lemmarep}\label{lem:splittingb}
  Let $\calC_1,\dots,\calC_m$ be the uninfluence decomposition of $\calC$. 
  There exists a bijective correspondence between $\pw(\calC)$ and $\pw(\calC_1) \times \cdots \times \pw(\calC_m)$.
\end{lemmarep}

\begin{proof}
	Define the function  $M:\pw(\calC)\rightarrow \pw_{\mathsf{X}_{1}\sqcup\Xexact}(\calC_1) \times \cdots \times
	  \pw_{\mathsf{X}_{m}\sqcup\Xexact}(\calC_m)$,
	that maps each possible world $w$ of $\calC$ to a
        tuple $\langle w_1, \ldots, w_m \rangle$ of possible worlds of $\calC_1,
        \ldots, \calC_m$ defined as follows: for each variable $x \in \calX$,
        letting $v$ its value in $\calC$, we give to~$x$ the value $v$ 
        in all possible worlds of $w_1, \ldots, w_m$ where $x$
        appears.
        Recall that we assumed w.l.o.g.\ that $m>0$, so each variable of~$\calX$
        must occur in at least one of~$w_1, \ldots, w_m$.
	
  It is immediate that any $w \in \pw(\calC)$ yields a tuple of
  possible worlds of $\calC_1, \ldots, \calC_m$ by this definition of $M$, since they are subsets of $\calC$. 
  
  Conversely, consider any tuple
  of possible worlds $\langle w_1,\dots,w_m\rangle\in\calC_1, \ldots, \calC_m$. Note that each $x_i\in\Xexact$ is always consistently mapped to the same value as each $\calC_i$ contains every exact-value constraint. Every other variable $x_i\in\mathsf{X}_j$ appears only in $w_j\in\pw_{\mathsf{X}_{j}\cup\Xexact}(\calC_j)$. Thus $w=M^{-1}(\langle w_1,\dots,w_m\rangle)$ is well-defined.
  Let us show that $w\in\pw(\calC)$.

  First, we observe that exact-value constraints are necessarily respected,
  because each~$\calC_i$ contains all such constraints. Second, let us show that order constraints are
  respected. Consider an
  order constraint of the form $x \leq x'$ in~$\calC$. Clearly, if $x$ and~$x'$
  belong to the same uninfluenced class~$\mathsf{X}_i$ (or if at least one of them has an exact-value constraint), the constraint is
  reflected in~$\calC_i$, so it must be respected in~$w$. Hence, we focus
  on the case where
  $x\in\mathsf{X}_i$ and $x''\in\mathsf{X}_j$ with $i\neq j$.
  Now,
  as $x \leq x'$, there exists a path $\calP$ of the form $x = x_1 \prec \cdots \prec x_n = x'$,
  but as they are not in the same uninfluenced class there must be a variable
  of~$\Xexact$ in the
  sequence. Let $x_k$ be this variable with exact-value $v$. By definition of the uninfluence decomposition, $x\leq x_k$ and $x_k=v$ are in $\calC_i$; similarly, $x_k\leq x'$ and $x_k=v$ are in $\calC_j$. Thus $x\leq x'$ must be respected in $w$ overall.
\end{proof}

\begin{example}
  \label{exa:influence}
  Let $\calX$ be $\{x, y, y', z, w\}$, and let $\calC$ be defined
  by $y' = 0.5$ and $x \leq y \leq y' \leq z$. The uninfluence classes are
  $\mathsf{X}_1 = \{x, y\}$, $\mathsf{X}_2 = \{z\}$, and $\mathsf{X}_3 = \{w\}$.
  The uninfluence decomposition thus consists of $\calC_1$, with variables
  $\mathsf{X}_1 \sqcup \{y'\}$, and constraints $x \leq y \leq y'$ and $y' =
  0.5$; $\calC_2$, with variables $\mathsf{X}_2 \sqcup \{y'\}$, and constraints
  $y' \leq z$ and $y' = 0.5$; and $\calC_3$, with variables $\mathsf{X}_3 \sqcup
  \{y'\}$, and constraint $y' = 0.5$.

\end{example}
We next use this independence property
to analyse restricted classes of constraint sets.

\subsection{Tree-Shaped Constraints}\label{sec:trees}
\begin{toappendix}
  \subsection{Tree-Shaped Constraints (Section~\ref{sec:trees})}
  \label{apx:sec:trees}
\end{toappendix}
We define the first restricted class of constraints that we consider:
\emph{tree-shaped} constraints. 
Recall that a \deft{Hasse diagram} is a representation of a partial order as a
directed acyclic graph, whose nodes correspond to $\calX$ and where there is an
edge $(x, y)$ if $x \prec y$. An example of such a diagram is the one used in
Section~\ref{sec:general}.

\begin{definition}
  \label{def:treeshaped}
  A constraint set $\calC$ over $\calX$
  is \deft{tree-shaped} if the probability of ties is zero,
  the Hasse diagram of the partial order induced on $\calX$ by $\calC$ is a directed tree, the root has
  exactly one child, and exactly the root and leaves are in $\Xexact$.
  Thus, $\calC$ imposes a global minimal value, and maximal values at each leaf,
  and no other exact-value constraint.

  We call~$\calC$
  \deft{reverse-tree-shaped} if the reverse
  of the Hasse diagram (obtained by reversing the direction of the edges)
  is tree-shaped.
\end{definition}

{Tree-shaped constraints are often encountered in practice, in particular in the
context of product taxonomies.
Indeed, while our example from Figure~\ref{fig:catalog} is a DAG,
many real-life taxonomies are trees: in particular, the Google Product
Taxonomy~\cite{googleproduct} and ACM CCS~\cite{acmccs}.}

We now show that for a tree-shaped constraint set $\calC$, unlike the general
case, we can tractably compute 
exact expressions of the 
expected values of
variables.
In the next two results, we assume
arithmetic operations on rationals to have
unit cost, e.g., they are performed up to a
fixed numerical precision. 
Otherwise, the complexities remain
polynomial but the degrees may be larger.
We first show:

\begin{theoremrep}\label{thm:voltree}
  For any tree-shaped constraint set\/ $\calC$ over\/ $\calX$, we can
  compute its volume $V(\calC)$ in time $O(\card{\calX}^2)$.
\end{theoremrep}

\begin{proofsketch}
We process the tree bottom-up, propagating a piecewise-polynomial function
expressing the volume of the subpolytope on the subtree rooted at each node as a
function of the value of the parent node: we compute it using
Lemma~\ref{lem:splittingb} from the child nodes.
\end{proofsketch}

\begin{proof}
  Let $T$ be the tree with vertex set $\calX$ which is the Hasse diagram of the
  order constraints imposed by~$\calC$.
  For any variable $x \in \calX$ that has no exact-value constraint (so it is
  not the root of~$T$ or a leaf of~$T$), let
  $\calC_x$ be the constraint set obtained as a subset of~$\calC$ by
  keeping only constraints between $x$ and its descendants in~$T$, as well
  as between $x$ and its parent. For $v \in [0, 1]$, we call $V_x(v)$ the
  $d$-volume
  of $\pw(\calC_x \cup \{x' = v\})$ where $x'$ is the parent of~$x$ and
  $d$ is the dimension of $\pw(\calC_x)$.
  In other words, $V_x(v)$ is the $d$-volume of the admissible polytope for the
  subtree $\restr{T}{x}$
  of~$T$
  rooted at $x$,
  as a function of the minimum value on $x$ imposed by the exact-value constraint
  on the parent of~$x$. It is clear that, letting $x_\r'$ be the one child of the
  root $x_\r$ of~$T$, we have $V(\calC) = V_{x_\r'}(v_\r)$, where $v_\r$ is the exact value
  imposed on~$x_\r$.

  We show by induction on~$T$ that, for any node~$x$ of $T$, letting $m_x$ be the
  minimum exact-value among all leaves that are descendants of $x$, 
  the function $V_x$ is zero in the
  interval $[m_x, 1]$ and can be expressed in $[0, m_x]$ as a polynomial whose
  degree is at most the number of nodes in $\restr{T}{x}$, written
  $\cardb{\restr{T}{x}}$.
  Since the probability of ties is~$0$, we have $m_{x}>0$ for all~$x$.

  The base
  case is for a node $x$ of $T$ which has only leaves as children; in this
  case it is clear that $V_x(v)$ is $m_x - v$ for $v \in [0, m_x]$, and is zero
  otherwise.
  For the inductive case, let $x$ be a variable. It is clear that
  $V_x(v)$ is $0$ for $v \in [m_x, 1]$. Otherwise, let $v' \in [0, m_x]$ be the
  value of the parent $x'$ of $x$. For every value $v' \leq v \leq m_x$ of $x$,
  consider the constraint set $\calC_{x, v, v'} = \calC_x \cup \{x' =
  v', x = v\}$. By Lemma~\ref{lem:splittingb}, we have $V(\calC_{x, v, v'}) =
  \prod_i V_{x_i}(v)$ where $x_1, \ldots, x_l$ are the children of $x$. Hence,
  by definition of the volume, we know that $V_x(v') = \int_{v'}^{m_x}
  \prod_i V_{x_i}(v) \d v$. Now, we use the induction hypothesis to deduce that
  $V_{x_i}(v)$, for all $i$, in the interval $[0, m_x]$, is a polynomial whose
  degree is at most~$\cardb{\restr{T}{x_i}}$. Hence, as
  the product of polynomials is a polynomial whose degree is the sum of the
  input polynomials, and integrating a polynomial yields a polynomial whose
  degree is one plus that of the input polynomial, $V_x$ in the interval $[0,
  m_x]$ is a polynomial whose degree is at most $\cardb{\restr{T}{x}}$.

  Hence, we have proved the claim by induction, and we use it to determine
  $V(\calC)$ as explained in the first paragraph.

  We now prove that the computation is quadratic. We first assume that the tree
  $T$ is binary. We show by induction that
  there exists a constant $\alpha\geq 0$ such that the computation of the
  polynomial $V_{x_i}$ in expanded form has cost less than $\alpha n_i^2$, where
  $n_i$ is $\cardb{\restr{T}{x_i}}$. The claim
  is clearly true for nodes where all children are leaves, because the cost is
  linear in the number of child nodes as long as $\alpha$ is at
  least the number the number of operations per node~$\alpha_0$.
  For the induction step, if $x_i$ is an
  internal node, let $x_p$ and $x_q$ be the two children. By induction
  hypothesis, computing $V_{x_p}$ and $V_{x_q}$ in expanded form has cost $\leq \alpha (n_p^2 +
  n_q^2)$. Remembering that arithmetic operations on rationals are assumed to
  take unit time, computing the product of $V_{x_p}$ and $V_{x_q}$ in expanded form has
  cost linear in the product of the degrees\footnote{We could compute the
    product of the polynomials more efficiently using an FFT, but this would
  not improve the overall complexity bound.} of $V_{x_p}$ and $V_{x_q}$ which are
  less than $n_p$ and $n_q$, so the cost of computing the product is
  $\leq \alpha_1 n_p
  n_q$ for some constant $\alpha_1$. Integrating has cost linear in the degree of the resulting polynomial,
  that is, $n_p + n_q$. So the total cost of computing $V_{x_i}$
  is $\leq \alpha
  (n_p^2 + n_q^2) + \alpha_1 n_p n_q + \alpha_2(n_p + n_q) +\alpha_3$ for some
  constants $\alpha_2$, $\alpha_3$. Now, as $n_q = n_i - n_p -1$,
  computing $V_{x_i}$ costs less than:
  \begin{align*}
     &\alpha n_i^2 +2\alpha n_p^2+\alpha
   -2\alpha n_i n_p-2\alpha n_i +2\alpha n_p+\alpha_1
  n_pn_q+\alpha_2 n_i-\alpha_2+\alpha_3\\
  ={}&\alpha
  n_i^2+(\alpha_1-2\alpha)n_pn_q+(\alpha_2-2\alpha)n_i+\alpha-\alpha_2+\alpha_3
 \end{align*}
 As long as $\alpha$ is set to be $\geq \max(\frac{\alpha_1}{2},
 \frac{\alpha_2}{2})$, the second and third terms are negative,
 which means (since $n_pn_q$ and $n_i$ are both $\geq 1$) that $V_{x_i}$
 costs less than:
 \begin{align*}
  &\alpha n_i^2+\alpha_1-2\alpha+\alpha_2-2\alpha+\alpha-\alpha_2+\alpha_3
  \\={}&\alpha n_i^2-3\alpha +\alpha_1+\alpha_3\leq \alpha n_i^2
\end{align*}
if $\alpha\geq\frac{\alpha_1+\alpha_3}{3}$. This concludes the induction
case, by setting $\alpha$ to any arbitrary value which is greater than
$\max\left(\alpha_0,\frac{\alpha_1}{2},\frac{\alpha_2}{2},\frac{\alpha_1+\alpha_3}{3}\right)$.
  Hence the claim is
  proven if $T$ is binary.

  If $T$ is not binary, we use the associativity of product to make~$T$
  binary, by adding virtual nodes that represent the computation of the
  product. In so doing, the size of $T$ increases only by a constant
  multiplicative factor (recall that the number of internal nodes in a
  full binary tree is one less than the number of leaves, meaning that
  the total number of nodes in a binary expansion of a $n$-ary product is
  less than twice the number of operands of the product).
  So the claim also holds for arbitrary $T$.
\end{proof}

See Appendix~\ref{apx:sec:trees} for the complete proof. This result can be applied to prove the tractability of computing
the marginal distribution of any variable~$x\in\calX\backslash\Xexact$ in a
tree-shaped constraint set, which is defined as the pdf
$p_{x}(v)\colonequals V_{d-1}(\calC\cup\{x=v\}) / V_d(\calC)$, where $d$
is the dimension of $\pw(\calC)$:

\begin{theoremrep}\label{thm:margtree}
  For any tree-shaped constraint set $\calC$ on variable set $\calX$, for any
  variable $x\in\calX\backslash\Xexact$, the marginal distribution for $x$ is piecewise polynomial
  and can be computed in time $O(\card{\Xexact} \times \card{\calX}^2)$.
\end{theoremrep}

\begin{proofsketch}
  We proceed similarly to the proof of Theorem~\ref{thm:voltree} but with two
  functions: one for $x$ and its descendants, and one for all other nodes. The
  additional~$\card{\Xexact}$ factor is because the second function
  depends on how the value given to~$x$ compares to
  the leaves.
\end{proofsketch}

\begin{proof}
  Recall that $\calC_{|x=v}$ is $\calC$ plus the exact-value constraint $x = v$. For any
  variable $x'$, we let $m_{x'}$ be
  the minimum exact-value among all leaves reachable from $x'$.
  By
  definition, the marginal distribution for~$x$ is
  $v \mapsto \frac{1}{V(\calC)} V(\calC_{|x=v})$.
  We have seen in Theorem~\ref{thm:voltree} that
  $\frac{1}{V(\calC)}$ can be computed in quadratic time; we now focus on the
  function $V(\calC_{|x=v})$.

  By Lemma~\ref{lem:splittingb}, letting $x_1, \ldots, x_k$ be the children of
  $x$, $D_1, \ldots, D_k$ be their descendants (the $x_i$ included), and
  $D$ be all variables except $x$ and its descendants, that is, $D \colonequals
  \calX \backslash (\{x\} \cup \bigsqcup_i D_i)$, we can express $V(\calC_{|x=v})$ as
  $V'_x(v) \times \prod_i V_{x_i}(v)$, where $V_{x_i}$ is as in the proof of
  Theorem~\ref{thm:voltree}, and $V'_x(v)$ is the volume of the constraint set
  $\calC'_{x,v}$
  over~$D$
  obtained by keeping all constraints in $\calC$ about variables of
  $D$, plus the exact-value constraint $x = v$. Indeed, the uninfluenced classes
  of~$\calC_{|x=v}$ are clearly $D \backslash \Xexact$, $D_1 \backslash \Xexact,
  \ldots, D_k \backslash \Xexact$. We denote by $\calX'$ the variables of
  $\calC'_{x,v}$.

  We know by the proof of Theorem~\ref{thm:voltree} that $V_{x_i}$, in the interval $[0,
  m_x]$, is a polynomial whose degree is at most $\cardb{\restr{T}{x_i}}$, and
  that it can be computed in $\oofb{\cardb{\restr{T}{x_i}}^2}$. Hence, the
  product of the $V_{x_i}(v)$ can be computed in quadratic time
  in~$\card{\restr{T}{x}}$ overall (as in the proof of Theorem~\ref{thm:voltree})
  and it has linear degree. We thus focus on $\calC'_{x,v}$, for which it
  suffices to show that $V(\calC'_{x,v})$ is a piecewise-polynomial function
  with at most $\card{\Xexact}$
  pieces, each piece having a linear degree and being computable in quadratic
  time in $\calX'$.
  Indeed, this suffices to justify that computing the product of~$V(\calC'_{x,v})$ with $\prod_i
  V_{x_i}(v)$, and integrating to obtain the marginal
  distribution, can be done in time $O(\card{\Xexact} \times \card{\calX}^2)$,
  and that the result is indeed piecewise polynomial.

  For any node $x_i$ of~$D$ with no exact-value constraint, we let $V_{x_i,x}'(v, v')$ be
  the volume of the constraint set obtained by restricting $\calC'_{x,v'}$ to the
  descendants of the parent $x_i'$ of $x_i$ and adding the exact-value constraint
  $x_i' = v$. We let $(v_1,
  \ldots, v_q)$ be the values occurring in exact-value constraints
  in $\calC$, in increasing order, so that $q \leq \card{\Xexact}$.
  We show by induction on $D$ the following
  claim: for any $1 \leq i < q$, for any variable $x_i$ in~$D$ with no exact-value
  constraint, in the intervals $v \in [0, m_{x_i}]$ and $v' \in [v_i,
  v_{i+1}]$, $V_{x_i}'(v, v')$ can be expressed as $P(v) + v'P'(v)$, where $P$
  and $P'$ are polynomials of degree at most $\cardb{\restr{T}{x_i} \cap D}$ and
  can be computed in quadratic time in~$\cardb{\restr{T}{x_i} \cap D}$.

  The proof is the same as in Theorem~\ref{thm:voltree}: for the base case where
  all children of $x_i$ have exact-value constraints,
  $V_{x_i}(v, v')$ is either $m_{x_i} - v$ if $x$ is not reachable from $x_i$ or
  $v_i \geq m_x$, or $v' - v$ otherwise. For the inductive case, we do the same
  argument as before, noting that, clearly, taking the product of the $V_{\cdot,
  \cdot}'(v, v')$ among the children of $x_i$, the variable $v'$ occurs in at most one of
  them, namely the one from which $x$ is reachable.
  We conclude that $V'_x(v)$ is indeed a piecewise polynomial function
  with at most $\card{\Xexact}$ many pieces, all of which have a linear degree, by evaluating $V_{x'',
  x}'(v'', v)$, where $x''$ is the one child of the root of $T$ and $v''$ is
  the value to which it has an exact-value constraint. The overall
  computation time is then in $O(\card{\Xexact} \times \card{\calX}^2)$.
\end{proof}

\medskip

We last deduce that our results for tree-shaped constraints extend to a more
general tractable case: constraint sets $\calC$ whose uninfluence decomposition
$\calC_1, \ldots, \calC_m$ is such that every $\calC_i$ is
(reverse-)tree-shaped. By Lemma~\ref{lem:splittingb}, each $\calC_i$ (and its
variables) can be considered independently, and reverse-tree-shaped trees can be
easily transformed into tree-shaped ones. Our previous algorithms thus apply to
this general case, by executing them on each constraint set of the uninfluence
decomposition that is relevant to the task (namely, containing the variable $x$
to interpolate, or top-$k$ candidates from the selected variables $\calX_{\sigma}$):

\begin{corollaryrep}\label{cor:split}
    Given any constraint set $\calC$ and its uninfluence
    decomposition $\calC_1, \ldots, \calC_m$, assuming that each $\calC_i$ is a
    (reverse-)tree-shaped constraint set, we can solve the interpolation
    problem in time $O(\max_i \card{\mathsf{X}_i}^3)$ and the top-$k$
    problem in PTIME.
\end{corollaryrep}
\begin{proof}
        First, notice that any
	reverse-tree-shaped constraint set~$\calC$ can be transformed to a tree-shaped
	constraint set~$\calC'$ such that
	$\mathbb{E}_{\calC'}(x)= 1 - \mathbb{E}_\calC(x)$ for every $x \in \calX$,
	by reversing order constraints and replacing exact-value constraints
	$x=\alpha$ with $x= 1 - \alpha$. Second, we also observe that, formally,
        the constraint sets in the uninfluence decomposition may include
        variables with exact value constraints that are not connected to the
        tree-shaped structure, but it is obvious that these variables have no
        impact on the interpolation problem.

        Now, we use Lemma~\ref{lem:splittingb} to deduce that we can indeed
        solve the problems by solving them separately in each constraint set of the
        uninfluence partition. For the interpolation problem, we can compute the
        interpolated value of a variable by looking only at its uninfluence class. The complexity of top-$k$ follows.
\end{proof}

{On large tree-shaped taxonomies (e.g.,
the Google Product Taxonomy~\cite{googleproduct}), in an interactive setting
where we may ask user queries (e.g., the one in the Introduction), 
we can improve running times by asking more queries. Indeed, 
each answer 
about a category adds an exact-value constraint, and
reduces the size of the constraint
sets of the uninfluence decomposition, which decreases the overall running time,
thanks to the superadditivity of $x
\mapsto x^3$.
We do not study which
variables should be queried in order to reduce the running time of the algorithm;
see, e.g.,~\cite{parameswaran2011human} for tree-partitioning algorithms.}

\section{Other Variants}\label{sec:variants}
\begin{toappendix}
  \subsection{Proofs of Comparison Results}
  \label{apx:prfcompare}
\end{toappendix}
We have defined top-$k$ computation on constraint sets 
by considering
the \emph{expected value} of each variable under the uniform
distribution. {Comparing to different definitions of top-$k$ on
unknown values that have been studied in previous work, our definition has some
important properties~\cite{cormode2009semantics}}: it provides a ready
estimation for unknown values (namely, their expected value) and guarantees an
output of size~$k$. Moreover, it satisfies the \emph{containment property}
of~\cite{cormode2009semantics}, defined in our setting as follows:

\begin{definition}
	A top-$k$ definition satisfies the \deft{containment property} if for
	any constraint set\/ $\calC$ on variables $\calX$, for any predicate
	$\sigma$ (where we write $\calX_{\sigma}$ the selected variables), and for any $k <
	\card{\calX_{\sigma}}$, letting $S_k$ and $S_{k+1}$ be the ordered lists of top-$k$ and
	top-$(k+1)$ variables, 
	$S_k$ is a strict prefix of~$S_{k+1}$.
\end{definition}

The containment property is a natural desideratum: computing the top-$k$ for
some $k \in \mathbb{N}$
should not give different variables or order for the top-$k'$ with
$k' < k$. Our definition clearly satisfies the containment
property (except in the case of ties). By contrast, we will now review prominent
definitions of top-$k$ on uncertain data from related
work~\cite{soliman2007top,cormode2009semantics,zhang2009semantics},
and show that they
do not satisfy the containment property when we apply them to the possible world
distributions
studied in our setting. We focus on two prominent definitions,
\emph{U-top-$k$} and \emph{global-top-$k$}
and call our own definition \deft{local-top-$k$} when comparing to them; we
also discuss other variants in Appendix~\ref{apx:othervars}.

\subparagraph*{U-top-$\bm{k}$.}

The \emph{U-top-$k$} variant does not study \deft{individual} variables but
defines the output as the \emph{sequence} of $k$ variables most likely to be the top-$k$ (in that order), for the uniform distribution on $\pw(\calC)$. 
We call this alternative definition \deft{U-top-$k$} by
analogy with~\cite{soliman2007top,cormode2009semantics}. 
Interestingly, the U-top-$k$ and local-top-$k$ definitions sometimes disagree in
our setting:

\begin{lemmarep}\label{lem:localvsu}
  There is a constraint set~$\calC$ and selection predicate~$\sigma$
  such that local-top-$k$ and U-top-$k$ do not match, even
  for $k = 1$ and without returning expected values or probabilities.
\end{lemmarep}

\begin{proof}
 Let $\mu = 2/3$, $m = 1/\sqrt{2}$, and pick any $v$ such that $\mu < v < m$.
 Consider variables $x$, $x'$ and $y$, with the constraint set that imposes $x'
 \leq x$ and $y = v$. Fix $k = 1$ and consider the predicate $\sigma$ that
 selects all variables.
 It is immediate by linear interpolation that the expected
 value of $x$ is $\mu$. {Hence $y$, that has a greater expected value, is the local-top-1. 
 However, the marginal distribution of $x$ can easily be computed to be $p_x : t
  \mapsto 2t$. Intuitively, when $x$ becomes larger, it makes a larger range of
  values possible for $x'$, and thus also a larger range of possible worlds. 
 By integrating, we obtain that $x$ has a probability of~$0.5$ of exceeding $m$ -- a value greater than $v$, $y$'s value -- and hence its probability of being the top-1 is greater than $y$'s. Namely, $x$ is the U-top-1. Intuitively, the volume of possible worlds where $x$ is the top-1 is greater due to the asymmetry of $x$'s distribution. However, there is a larger \emph{gap} (on average) between $y$ and $x$ in worlds where $y$ is the top-1, which ultimately leads to a higher expected value for $y$.}
\end{proof}

We can easily design an algorithm to compute U-top-$k$ in
PSPACE and in polynomial time in the number of linear extensions of $\calC$:
compute the probability of each linear extension as in
Algorithm~\ref{alg:brute}, and then sum on linear extensions depending on which
top-$k$ sequence they realize (on the variables selected by $\sigma$), to
obtain the probability of each answer. Hence:

\begin{proposition}\label{prp:utopk}
  For any constraint set~$\calC$ over $\calX$, integer~$k$ and selection predicate~$\sigma$,
  the U-top-$k$ query for $\calC$ and $\sigma$ can be computed in PSPACE and in time
  $\oof{\mathrm{poly}(N)}$, where~$N$ is the number of linear extensions
  of~$\calC$.
\end{proposition}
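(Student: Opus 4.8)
The plan is to reuse the volume machinery already developed for Algorithm~\ref{alg:brute}. By Lemma~\ref{lem:noties} we may assume that ties have probability~$0$, so up to a measure-zero set every possible world of $\calC$ induces a total order of the variables, i.e.\ a linear extension~$\calT$ of $\calC$. Each such $\calT$ determines a unique candidate U-top-$k$ answer, namely the ordered sequence $\mathrm{top}_k(\calT)$ of the $k$ largest variables of $\calX_\sigma$ according to the order of~$\calT$. The key identity I would establish is that, for any ordered sequence $s$ of $k$ distinct variables from $\calX_\sigma$, the probability that $s$ is the realized top-$k$ under the uniform distribution is
\[
\Pr[s] = \frac{1}{V(\calC)} \sum_{\calT \,:\, \mathrm{top}_k(\calT) = s} V(\calT),
\]
where $V(\calT)$ is the volume of the linear extension computed exactly as in Algorithm~\ref{alg:brute}, as the product of its fragment volumes. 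The U-top-$k$ answer is then the $s$ maximizing $\Pr[s]$, equivalently maximizing the unnormalized sum $\sum_{\calT:\,\mathrm{top}_k(\calT)=s} V(\calT)$ (so $V(\calC)$ never needs to be computed for this decision).

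For the time bound, I would enumerate all $N$ linear extensions of~$\calC$ (in constant amortized time each via~\cite{PruesseR94}); for each $\calT$ I compute $V(\calT)$ in PTIME and its realized sequence $\mathrm{top}_k(\calT)$, and accumulate $V(\calT)$ into an associative structure keyed by top-$k$ sequences. Since each of the $N$ extensions contributes to exactly one key, at most $N$ distinct keys ever appear, so the structure has size $\oof{\mathrm{poly}(N)}$ and the whole procedure runs in $\oof{\mathrm{poly}(N)}$; it finishes by returning the key of maximum accumulated volume.

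For the PSPACE bound, the obstacle is precisely that this associative structure may hold exponentially many keys, so it cannot be stored in polynomial space. I would instead swap the loops: enumerate candidate sequences~$s$ in an outer loop (there are at most $\card{\calX_\sigma}!/(\card{\calX_\sigma}-k)!$ of them, and they can be generated one at a time in polynomial space), and for each~$s$ re-enumerate all linear extensions in an inner loop, maintaining $\sum_{\calT:\,\mathrm{top}_k(\calT)=s} V(\calT)$ and keeping only the current best sequence together with its running sum. Linear extensions can be enumerated in polynomial space; each $V(\calT)$ is a rational of polynomial bit-length after the integer rescaling of Theorem~\ref{thm:expectmember}; and summing at most $N \le \card{\calX}!$ such values keeps the bit-length polynomial, since $\log N$ is polynomial in $\card{\calX}$. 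Comparing two accumulated sums is then a polynomial-space rational comparison. Thus only polynomial space is ever used, which gives the PSPACE bound. The only genuinely new ingredient beyond Algorithm~\ref{alg:brute} is this loop reorganization together with the bit-length accounting for the accumulated sums; the volume and ordering computations are all inherited from the earlier construction.
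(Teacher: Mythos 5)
Your proposal is correct and follows essentially the same route as the paper, whose entire proof is the one-sentence sketch preceding the proposition: enumerate the linear extensions, compute each volume $V(\calT)$ as in Algorithm~\ref{alg:brute}, and aggregate these volumes according to the top-$k$ sequence each extension realizes on $\calX_\sigma$, then return the sequence of maximal accumulated volume. The paper leaves the PSPACE bound entirely implicit; your loop-swapping argument and the bit-length accounting for the accumulated sums supply exactly the details needed to justify it, so your write-up is, if anything, more complete than the paper's.
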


Unlike Theorem~\ref{thm:expectmember}, however, this does not imply
$\fpsp$-membership: when
selecting the most probable sequence, 
the number of candidate sequences
may not be polynomial (as $k$ is not fixed). We leave to future work an
investigation of the precise complexity of U-top-$k$.

We show that in our setting
U-top-$k$ does not satisfy the \deft{containment
property} of~\cite{cormode2009semantics}.

\begin{lemmarep}\label{lem:utopkcp}
  There is a constraint set\/~$\calC$ without ties such that U-top-$k$ does not
  satisfy the containment property for the uniform distribution on
  $\pw(\calC)$.
\end{lemmarep}

\begin{proof}
  Consider variables $x_{\l}$, $x_{\h}$, $x^+_{\f}$, and $x^-_{\f}$, and the
  constraint set $\calC$ that imposes $x_{\l} \leq x_{\h}$, $x_{\f}^+ = .7$, and
  $x_{\f}^- = .69$.
  Consider the selection predicate $\sigma$ that selects all variables. The
  total volume of the constraint set is clearly $V = 1/2$.

  We first set $k = 1$. The first possible answer is $(x_{\f}^+)$ with probability
  $\frac{.7\times.7}{2 \cdot V} = .49$, and the second is $(x_{\h})$ with
  probability $.51$, so the U-top-$1$ is $(x_{\h})$.

  We then set $k = 2$. 
  There are four possible answers. The first possible
  answer is $(x_{\f}^+, x_{\f}^-)$ with probability $\frac{.69 \times .69}{2
  \cdot V} = .4761$.
  The second possible
  answer is $(x_{\h}, x_{\f}^+)$ with probability $\frac{.3 \times .7}{V} =
  .42$. The third possible answer
  is $(x_{\h}, x_{\l})$ with probability $.09$. The fourth possible answer is
  $(x_{\f}^+, x_{\h})$ with probability $\frac{.01\times .69 + .01 \times .01
  \times .5}{V} = .0139$. Hence, the U-top-$2$ is $(x_{\f}^+, x_{\f}^-)$.

  Hence, the U-top-$1$ variable does not occur in the U-top-$2$.
\end{proof}

\subparagraph*{Global-top-$\bm{k}$.}
We now study the \deft{global-top-$k$} definition~\cite{zhang2009semantics}, and
show that it does not respect the containment property either, even though it is
defined on individual variables:

\begin{definition}
  The \deft{global-top-$k$ query}, for a constraint set\/~$\calC$, selection
  predicate $\sigma$, and integer~$k$, returns the $k$ variables that have
  the highest probability in the uniform distribution on~$\pw(\calC)$ to be
  among the variables with the $k$ highest values, sorted by decreasing
  probability.
\end{definition}

\begin{lemmarep}\label{lem:gtopkcp}
  There is a constraint set\/~$\calC$ without ties such that global-top-$k$ does not
  satisfy the containment property for the uniform distribution on
  $\pw(\calC)$.
\end{lemmarep}

\begin{proof}
  Consider variables $x_\s$, $x_\f$, $x_\l$ and $x_\h$ and the constraint set~$\calC$
  that imposes $x_\l \leq x_\h$, $x_\l = .45$ and $x_\f = .73$. Consider
  $\sigma$ that selects all variables.

  Set $k \defeq 1$. Variable $x_\h$ has the highest value with probability
  $\frac{1}{V} (.73 \times
  (1-.73) + (1 - .73)^2 / 2)$. Variable $x_\f$ has the highest value with probability
  $\frac{1}{V} (.73-.45) \times .73$, which is less. The probability for $x_\s$
  is also less. So the global-top-1 is $(x_\h)$.

  Now, set $k \defeq 2$. Variable $x_{\h}$ has one of the two highest values in
  all cases except for $x_\f \geq x_\s \geq x_\h \geq x_\l$ and $x_\s \geq x_\f
  \geq x_\h \geq x_\l$, so it has one of the two highest values with
  probability $1 - \frac{1}{V} ((.73-.45)\times (1-.73) + (.73-.45)^2/2)$.
  However, variable $x_\f$ has one of the two highest values in all cases except
  for $x_\h \geq x_\s \geq x_\f \geq x_\l$ and $x_\s \geq x_\h \geq x_\f \geq
  x_\l$, so it has one of the two highest values with probability $1 - \frac{1}{V}
  (1-.73)^2$, which is more. Hence the first variable of the global-top-2
  is $x_\f$ and not $x_\h$.
\end{proof}

\begin{toappendix}
\subsection{Other Variants}
  \label{apx:othervars}
Additional variants of top-$k$ have been studied,
see~\cite{cormode2009semantics,zhang2009semantics}. However, in the context
of~\cite{cormode2009semantics}, these definitions do not satisfy the containment property either,
except for two. The first, \mbox{U-kRanks}~\cite{soliman2007top}, does not
satisfy the natural property that top-$k$ answers always contain $k$ different
variables. The second, expected ranks~\cite{cormode2009semantics}, resembles local-top-$k$ but uses ranks instead of values, so the definition is \deft{value-independent}. 
While this makes sense for top-$k$
queries designed to return tuples, as
in~\cite{cormode2009semantics}, we argue it is less sensible when
focusing on the numerical value of variables; this justifies our focus on
local-top-$k$.

Another possibility to define top-$k$ in our context, however, is to design it
based on different assumptions. One natural choice is to require a
\deft{stability} property, namely, adding exact-value constraints to fix some
variables to their interpolated values does not change the interpolated values
of the other variables. We can show that this property is not respected by our
scheme, but that it can be enforced on tree-shaped constraint sets: see Appendix~\ref{apx:stable}.
\end{toappendix}

\section{Related Work}
\label{sec:related}

We extend the discussion about related work from the Introduction.

\subparagraph*{Ranking queries over uncertain databases.}
A vast body of work has focused on providing semantics and evaluation
methods for order queries over uncertain databases, including 
top-$k$ and ranking queries (e.g.,~\cite{cormode2009semantics,detwiler2009integrating,HaghaniMA09,hua2011ranking,JestesCLY11,li2009ranking,re2007efficient,SolimanIB10,wang2011pruning,yi2008efficient}). 
Such works consider two main uncertainty types:
\emph{tuple-level uncertainty}, where the existence of tuples (i.e., variables) is
uncertain, and hence affects the query
results~\cite{cormode2009semantics,detwiler2009integrating,hua2011ranking,JestesCLY11,li2009ranking,re2007efficient,wang2011pruning,yi2008efficient};
and
\emph{attribute-level uncertainty}, more relevant to our problem, where the data tuples are known but
some of their values are unknown or
uncertain~\cite{cormode2009semantics,HaghaniMA09,JestesCLY11,SolimanIB10}.
Top-$k$ queries over uncertain data following \cite{SolimanIB10} was recently
applied to crowdsourcing applications in \cite{ciceri2016crowdsourcing}.
These studies are relevant to our work as they identify multiple possible
semantics for order queries in presence of uncertainty, and specify desired
properties for such
semantics~\cite{cormode2009semantics,JestesCLY11}; 
our
definition of top-$k$ satisfies the desiderata that are relevant to attribute-level uncertainty~\cite{JestesCLY11}. 

We depart from this existing work in two main respects. 
First, existing work assumes that each variable is given with an
\emph{independent function} that describes its probability distribution.
We do not assume this, and instead \emph{derive} expressions for the expected values of variables in a principled way from a uniform prior on the possible worlds.
Our work is thus well-suited to the many situations where probability
distributions on variables are not known, or where they are not independent
(e.g., when order constraints are imposed on them).
For this reason, the problems that we consider are generally computationally
harder. For instance, \cite{SolimanIB10} is perhaps the closest to our work,
since they consider the total orders compatible with given partial order
constraints. However, they assume independent marginal distributions, so they
can evaluate top-$k$ queries by only considering $k$-sized prefixes of the
linear extensions; in our setting even computing the top-1 element is hard
(Theorem~\ref{thm:topkhard}).

The second key difference is that other works \emph{do not try to estimate the top-$k$ values}, because they assume that the marginal distribution is given: they only focus on ranks.
In our context, we need to compute missing values, and need to account, e.g.,
for exact-value constraints and their effect on the probability of possible
worlds and on expected values (Section~\ref{sec:brute}).

We also mention our previous
work~\cite{amarilli2014uncertainty} 
which considers
the estimation of uncertain values (expectation and variance), 
but only in a \emph{total order}, and did not consider
complexity issues.

\subparagraph*{Partial order search.}
Another relevant research topic, \emph{partial order search}, considers queries
over elements in a partially ordered set to find
a subset of elements with a certain
property~\cite{amarilli2014complexity,davidson2013topk,faigle1986searching,gunopulos2003discovering,parameswaran2011human}.
This 
relates to many applications, e.g., crowd-assisted graph search~\cite{parameswaran2011human}, frequent itemset mining with the crowd~\cite{amarilli2014complexity}, and knowledge discovery, where the unknown data is queried via oracle calls~\cite{gunopulos2003discovering}.
These studies are
\emph{complementary to ours}: 
when the target
function can be phrased 
as a top-$k$ or interpolation problem,
if the search is stopped 
before all values are known, we can use our
method to estimate the complete output.

\subparagraph*{Computational geometry.}
Our work reformulates the interpolation problem as a centroid
computation problem in the polytope of possible worlds defined by the
constraint set. This problem has been studied independently by computational
geometry work~\cite{rademacher2007approximating,kannan1997random,maire2003algorithm}.

Computational geometry mostly studies \emph{arbitrary} convex polytopes (corresponding to polytopes defined by arbitrary linear constraint sets), and often
considers the task of \emph{volume computation}, which is related to the problem
of computing the centroid~\cite{rademacher2007approximating}. In this context,
it is known that computing the exact volume of a polytope is not in
$\fpsp$ because the output is generally not of polynomial size~\cite{lawrence1991polytope}. Nevertheless, several (generally exponential)
methods for exact volume computation \cite{bueler2000exact} have been developed.
The problem of approximation has also been studied, both theoretically
and practically
\cite{kannan1997random,simonovits2003compute,delorea2013software,cousins2015practical,LovaszD12,GeM15}.
Our problem of \emph{centroid} computation is studied
in \cite{maire2003algorithm}, whose algorithm is based on the idea of computing
the volume of a polytope by computing the lower-dimensional volume of its
facets. This is different from our algorithm, which divides the polytope
along linear extensions into subpolytopes, for which we apply a specific volume
and centroid computation method. 

Some works in computational geometry specifically study \emph{order polytopes}, i.e., the
polytopes defined by constraint sets with only order constraints and no
exact-value constraints. For such polytopes, volume computation is known to be
$\fpsp$-complete \cite{brightwell1991counting}, leading to a $\fpsp$-hardness result for
centroid computation \cite{rademacher2007approximating}. 
However, these results do not apply to \emph{exact-value constraints}, i.e.,
when order polytopes can only express order relations, between variables which are in $[0, 1]$.
Exact-value constraints are both highly relevant in practice (to represent
numerical bounds, or known information, e.g., for crowdsourcing), allow for
more general polytopes, and complicate the design of Algorithm~1, which must
perform volume computation and interpolation in each \emph{fragmented} linear
order.

Furthermore, to our knowledge, computational
geometry works do not study the top-$k$ problem, or polytopes that correspond to tree-shaped constraint sets, since these have no clear geometric interpretation.

\subparagraph*{Tree-shaped partial orders.}
Our analysis of tractable schemes for tree-shaped partial orders is reminiscent of the
well-known tractability of probabilistic inference in tree-shaped graphical
models~\cite{bishop}, and of the tractability of probabilistic query evaluation
on trees~\cite{cohen2009running} and treelike
instances~\cite{amarilli2015provenance}.
However, we study continuous distributions on numerical values, and the influence
between variables when we interpolate does not simply follow the tree structure; so 
our results do not seem to follow from these settings.

\section{Conclusion}
\label{sec:conc}

In this paper, we have studied the problems of top-$k$ computation and interpolation for data with unknown values and order constraints.
We have provided foundational solutions, including a general
computation scheme, complexity bounds, and analysis of tractable cases.

One natural direction for future work is to study whether our tractable cases
(tree-shaped orders, sampling) can be covered by more efficient PTIME
algorithms, or whether more general tractable cases can be identified:
for instance, a natural direction to study would be partial orders with a
\emph{bounded-treewidth} Hasse diagram, following recent tractability results
for the related problem of linear extension counting~\cite{kangas2016counting}. 
Another question is to extend our scheme to
request additional values from the crowd, as in~\cite{amarilli2014complexity,ciceri2016crowdsourcing},
and reduce the expected error on the interpolated values or top-$k$ query, relative to a user goal.
In such a setting, how should we choose which values to retrieve, and could we update incrementally
the results of interpolation when we receive new exact-value constraints?
Finally, it would be interesting to study whether our results generalize to different prior distributions on the polytope.

\subsection*{Acknowledgements}
This work is partially supported by the European Research Council under the FP7, ERC grant MoDaS, agreement 291071, by a grant from the Blavatnik Interdisciplinary Cyber Research Center, 
by the Israel Science Foundation (grant No.~1157/16), 
and by the Télécom ParisTech Research Chair on Big Data and Market Insights.

\bibliographystyle{abbrv}
\bibliography{bib}

\begin{toappendix}
\section{Alternative Interpolation Scheme}
\label{apx:stable}
We now consider variants of the interpolation problem, thus far performed by considering the expected value under the uniform distribution.
Since we are not aware of candidate variants in previous work, for interpolating over partial orders with exact-value constraints, we propose a new alternative variant. 
Rather than
imposing the connection to the uniform prior, we present natural desiderata for
an interpolation scheme on partial orders. We show that they are not respected by our
current definition, and show that a definition that respects them can be
proposed for tree-shaped partial orders (Definition~\ref{def:treeshaped}), it is in fact
unique, and it can be computed tractably.

We first define the notion of interpolation scheme.

\begin{definition}
  An \deft{interpolation scheme} is a function that maps any constraint set
  $\calC$ on variables $\calX$ to a mapping from $\calX$ to its interpolated value in $[0, 1]$.
\end{definition}

For instance, the interpolation scheme that we have studied thus far maps each variable to
  its expected value under the uniform distribution on $\pw(\calC)$. We refer to this scheme as \uniform in the sequel.

We define the first natural desideratum for interpolation schemes, stability: intuitively, an interpolation scheme is stable if assigning variables to
their interpolated value does not change the result of interpolation elsewhere. Formally,

\begin{definition}
An interpolation scheme $\calS$ is \deft{stable} if, for every constraint set $\calC$ over $\calX$ and every $x\in\calX$, $\calS$~assigns the same mapping $f:\calX\rightarrow[0,1]$ to both $\calC$ and $\calC\cup\{x=f(x)\}$.
\end{definition}

This property can be shown to be respected, e.g., by linear interpolation on total
orders. However, a counterexample shows that the stability property is not
respected by the uniform scheme:

\begin{lemma}\label{lem:unifnotstable}
  The \uniform scheme is not stable, even on tree-shaped constraint sets.
\end{lemma}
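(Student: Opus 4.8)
The plan is to refute stability by exhibiting a single tree-shaped counterexample: a constraint set $\calC$ and a variable whose fixing to its \uniform value perturbs the interpolated value elsewhere. Since stability is vacuous when $\calX\setminus\Xexact$ has at most one element (fixing the sole unknown to its own value changes nothing), the example must contain at least two unknown internal nodes, and the dependence between them must be genuinely \emph{non-affine} --- otherwise conditioning on a mean would automatically preserve all other means.

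Concretely, I would take $\calX=\{r,c,d,\ell_1,\ell_2\}$ with the order $r\prec c$, $c\prec \ell_1$, $c\prec d$, $d\prec\ell_2$ and exact values $r=0$, $\ell_1=\tfrac12$, $\ell_2=1$; here $r$ is the root with single child $c$, the $\ell_i$ are the leaves, and $c,d$ are the only unknowns, so $\calC$ is tree-shaped in the sense of Definition~\ref{def:treeshaped}. The admissible polytope in the unknown coordinates is $\{(c,d): 0\le c\le \tfrac12,\ c\le d\le 1\}$.

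Next I would compute the \uniform values, either directly from the two linear extensions $r<c<d<\ell_1<\ell_2$ and $r<c<\ell_1<d<\ell_2$, whose volumes are $\tfrac18$ and $\tfrac14$, by weighting the per-extension linear interpolations of Proposition~\ref{prp:total-order-no-exact}, or from the marginals of Theorem~\ref{thm:margtree}. Both routes give $\mathbb{E}_\calC[c]=\tfrac29$ and $\mathbb{E}_\calC[d]=\tfrac{11}{18}$. I would then add the exact-value constraint $d=\tfrac{11}{18}$: the bound $c\le d$ becomes slack against $c\le\tfrac12$, so $c$ is now uniform on $[0,\tfrac12]$ and $\mathbb{E}_{\calC\cup\{d=11/18\}}[c]=\tfrac14\neq\tfrac29$. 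This contradicts stability and proves the lemma.

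The main obstacle is not the arithmetic but choosing \emph{which} variable to fix. Fixing the upper variable $c$ would not work: given $c$, the variable $d$ is uniform on $[c,1]$, so $\mathbb{E}[d\mid c]=\tfrac{c+1}{2}$ is affine in $c$ and conditioning on $\mathbb{E}_\calC[c]$ reproduces $\mathbb{E}_\calC[d]$ exactly. The asymmetry I must exploit is that, conversely, $\mathbb{E}[c\mid d]=\tfrac12\min(\tfrac12,d)$ is a \emph{capped}, strictly concave function of $d$; by Jensen's inequality its average over the marginal of $d$ lies strictly below its value at $\mathbb{E}_\calC[d]$, which is precisely the gap $\tfrac29<\tfrac14$ witnessed above. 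The only care needed is to confirm that the example is strictly tree-shaped and that this dependence is non-affine, both of which hold here.
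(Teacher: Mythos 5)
Your proof is correct and takes essentially the same approach as the paper's: exhibit an explicit tree-shaped counterexample, compute the uniform interpolated values, fix one unknown variable to its interpolated value, and verify by direct computation that the interpolated value of another unknown changes. Your example is even a bit smaller than the paper's (two unknowns instead of three, giving cleaner arithmetic: $2/9$ versus $1/4$ rather than $3/20$ versus $611/4020$), and your Jensen's-inequality explanation of why one must fix the lower variable $d$ rather than $c$ is a nice supplementary insight, but the method is the same.
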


\begin{proof}
  Consider the set of variables $\{x_\r, x_\a, x_\b, x_\c, x_\d, x_\e\}$
  and the constraint set $\calC$ formed of the order constraints $x_\r \leq
  x_\a$, $x_\a \leq x_\b \leq x_\c$, $x_\a \leq x_\d \leq x_\e$, and the exact-value
  constraints $x_\r = 0$, $x_\c = .5$ and $x_\e = 1$. We can compute that the
  interpolated values for $x_\a$ and for $x_\b$ are $3/20$ and $13/40$
  respectively. However, adding the exact-value constraint $x_\b = 13/40$, the
  interpolated value for $x_\a$ becomes $611/4020$, which is different from
  $3/20$.
\end{proof}

Let us use stability as a guide to design a different interpolation scheme. We impose another desideratum to act as
a \emph{base case}, specifying what one expects from an interpolation scheme when
there is only a single unknown variable:

\begin{definition}
Let $\calC$ be a (non-contradictory) constraint set such that $x$ is the only unknown variable; $y_1,\dots,y_n$ are variables with exact-value constraints such that $y_i\leq x$; and $z_1,\dots,z_m$ are variables with exact-value constraints such that $x\leq z_i$ (having $n,m\geq 1$).
  We say an interpolation scheme is \deft{balanced} if, for each
  such~$\calC$, its interpolated value for $x$ is $\max_i(v(y_i)) + \min_i (v(z_i))
  \over 2$.
\end{definition}

In particular, the \uniform scheme is balanced in this sense.
However, we would like to find a scheme that is \emph{both balanced and stable}. For the case of general constraint sets, this problem remains open, and we leave it for future work. For tree-shaped constraint sets, we next not only show such a scheme, but also prove it is unique, as follows.

\begin{proposition}\label{prp:onestable}
  There is at most one interpolation scheme on trees that is both stable and
  balanced.
\end{proposition}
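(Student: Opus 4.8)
The plan is to show that \emph{any} stable and balanced scheme on trees is forced to satisfy a single system of local averaging equations, and then that this system admits at most one solution; two such schemes must then coincide, which is exactly the ``at most one'' claim. Throughout I fix a tree-shaped $\calC$ and write $\pi(x)$ for the parent and $C(x)$ for the children of an internal (hence unknown) node $x$. Recall that in a tree the variables comparable \emph{below} $x$ are exactly its ancestors (up to the root) and those \emph{above} $x$ are its descendants (down to the leaves), since $y\leq x$ (resp.\ $x\leq z$) holds iff there is a directed path in the Hasse diagram.

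\textbf{Step 1: deriving the equations.} Let $\calS$ be stable and balanced, and let $f$ be the mapping it assigns to $\calC$. For a chosen unknown $x$, I would fix every \emph{other} variable to its $f$-value, one at a time, using stability: each step replaces the current set by $\calC\cup\{y=f(y)\}$ without changing the output mapping $f$. The resulting set $\calC'$ has $x$ as its unique unknown, with the root and the other ancestors of $x$ as exact lower bounds and the descendant leaves as exact upper bounds, so the hypotheses of balancedness ($n,m\geq 1$) hold. Balancedness then forces $f(x)=\frac{A_x+B_x}{2}$, where $A_x=\max\{f(y):y\leq x\}$ and $B_x=\min\{f(z):x\leq z\}$ are taken over the now-exact comparable variables. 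A preliminary sanity check is that every intermediate set stays non-contradictory, so that $\calS$ is actually defined on it: since a balanced value always lands between its governing lower and upper bounds, one verifies that $f(y)$ never leaves the feasible range of $y$, and the stability steps are all applied to legitimate inputs.

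\textbf{Step 2: uniqueness of the solution.} Suppose $f$ and $g$ both satisfy the Step~1 equations with the same boundary values (they agree on the root and on every leaf, by the exact-value constraints). I would run a maximum-principle argument. Let $M=\max_x\bigl(f(x)-g(x)\bigr)$ over internal nodes, attained at $x_0$, and assume $M>0$. Choosing $y^\dagger$ to maximise $f$ among the lower bounds of $x_0$ gives $A^f_{x_0}-A^g_{x_0}\leq f(y^\dagger)-g(y^\dagger)\leq M$, and symmetrically $B^f_{x_0}-B^g_{x_0}\leq M$; since $f(x_0)-g(x_0)=\frac12(A^f_{x_0}-A^g_{x_0})+\frac12(B^f_{x_0}-B^g_{x_0})=M$, both inequalities must be equalities, so some ancestor $y^\dagger$ of $x_0$ (strictly closer to the root) again attains difference $M$. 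Iterating up the ancestor chain, whose depth strictly decreases, we reach the root, where the difference is $0$, contradicting $M>0$. Hence $M\leq 0$, and by the symmetric argument $\max_x(g(x)-f(x))\leq 0$, so $f=g$ on all internal nodes.

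\textbf{Main obstacle.} The two delicate points are, first, that the reduction in Step~1 must be \emph{well defined}, i.e.\ fixing variables to their interpolated values never creates a contradictory set, which is what licenses invoking balancedness; and second, the uniqueness in Step~2, where the equations are \emph{nonlinear} because of the $\max$/$\min$. The ascending-chain maximum principle is the crux: it exploits that each equation couples $f(x)$ to strictly \emph{lower} comparable variables with coefficients summing to $1$, so an extremal discrepancy propagates toward the root and is annihilated by the fixed boundary value there.
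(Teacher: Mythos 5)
Your proof is correct, and it shares the paper's two-step skeleton---use stability to reduce to single-unknown instances where balancedness forces a system of equations, then show that system has at most one solution---but both steps are executed genuinely differently. For the equations, the paper fixes only the parent $y$ and the children $z_i$ of an unknown $x$ and asserts the \emph{local} relation $f(x)=\bigl(f(y)+\min_i f(z_i)\bigr)/2$; strictly speaking this is not yet an instance of the balanced definition (which requires $x$ to be the \emph{only} unknown), so the paper's derivation implicitly also needs that interpolated values are monotone along the tree, so that the maximum over ancestors is attained at the parent and the minimum over descendants at a child. Your version, which fixes \emph{all} other variables and obtains the \emph{global} relation $f(x)=\bigl(\max_{y\leq x}f(y)+\min_{x\leq z}f(z)\bigr)/2$, follows more directly from the definition as stated. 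For uniqueness, the paper propagates the discrepancy $d=g-f$ \emph{downward}: choosing for each scheme the child minimizing its value, it shows $d$ strictly increases along a descending chain and derives a contradiction at the leaves, where $d=0$; you instead run a discrete maximum principle \emph{upward}, showing that a maximal positive discrepancy is inherited by a strict ancestor and annihilated at the root. Both handle the $\min$/$\max$ nonlinearity by the same witness-exchange trick, and both are sound. What the paper's local equations buy is the companion result: they are exactly the recurrence behind the explicit top-down construction and complexity analysis of Proposition~\ref{prp:stablebalance}, which your global equations do not immediately yield; what your route buys is a cleaner derivation of the equations and a uniqueness argument that does not require first locating a node where $d$ exceeds its parent's value.

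One caveat on your Step 1: the feasibility ``sanity check'' as you phrase it is circular---you cannot invoke balancedness to argue that $f(y)$ lies between its governing bounds before the reduction establishing the equations is complete. The clean resolution (left implicit by the paper, and in fact necessary for the proposition to hold at all, since otherwise vacuously ``stable'' schemes assigning infeasible values would be counterexamples) is to read stability as asserting that $\calC\cup\{x=f(x)\}$ is itself a legitimate, non-contradictory constraint set receiving the same mapping; iterating this licenses every fixing step, and incidentally forces the monotonicity of $f$ that the paper's local equations presuppose.
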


\begin{proof}
We first observe that, because of the balanced requirement, in any constraint
set $\calC$, for any unknown variable $x$ whose parent $y$ was interpolated to value $v$ and
whose children $z_i$ were interpolated to $w_i$, $x$ must have be interpolated to
$(v + (\min_i w_i))/2$. Indeed, considering the constraint set $\calC'$ where $y$
and $z_i$ have been set to those values, by stability, the interpolation value
of $x$ does not change. Hence, as the interpolation scheme is balanced, we
conclude that the claimed property holds.

We now show that the resulting set of equations always has at most one solution
on any tree-shaped constraint sets. Indeed, assume that there are two stable and
balanced interpolation
schemes $f$ and $g$ which yield different results on a tree-shaped constraint
set $\calC$. For all variables
$x$ of $\calC$, let $d(x) \defeq g(x) - f(x)$. Calling $x_{\r}$ the root
variable of~$\calC$, we must have $d(x_{\r}) \defeq 0$ for the root, because it
has an exact-value constraint by definition of tree-shaped constraint sets.

Now, as
$f$ and $g$ differ on $\calC$, there must be a variable $x$ where $d(x) \neq 0$.
Without loss of generality, we have
$d(x) > 0$. Hence, let us consider a variable $x$ with parent $y$ so that we have
$d(x) > d(y)$: as $d(x_{\r}) = 0$, we can find such a variable $x$ by picking a
variable which is as high as possible in the tree, such that $d(x) > 0$ but
$d(y) = 0$. Necessarily $x$ is not a leaf (as they have exact-value
constraints, so $d(x) = 0$), so $x$ has children. We show that $x$ has a child
$x_g$ such that $d(x_g) > d(x)$.

Consider $x_f$ the child of $n$ such that $f(x_f)$ is minimal
among children of~$x$, and $x_g$ defined in the analogous manner for $g$. Now,
as $f$ and $g$ are balanced, by our preliminary observation we have $f(x) = (f(y) + f(x_f))/2$ hence $f(x_f) =
2\cdot f(x) - f(y)$, and likewise $g(x_g) = 2\cdot g(x) - g(y)$.
But then, by minimality of $g(x_g)$,
we have $g(x_g) - f(x_f) \leq g(x_f) - f(x_f)$. Now, we have
$g(x_g) - f(x_f) = 2\cdot d(x) - d(y)$. Now, as we have $d(y) < d(x)$, we have
$d(x_f) > d(x)$, which is what we wanted to show.

Now, repeating the argument on $x_f$, we obtain a child $x_f^2$ of $x_f$ such
that $d(x_f^2) > d(x_f)$. Repeating the argument, we thus build a descending chain of
variables $x$ in the tree-shaped constraint set $\calC$ along
which $d$ is strictly increasing. When we reach the leaves, we obtain a
contradiction. This implies that we must have $d(x) = 0$ for all $x \in \calX$,
so that $f = g$ on $\calC$. Hence, there cannot be two different stable and
balanced interpolation schemes on tree-shaped constraint sets.
\end{proof}

We now prove the \emph{existence} of a stable and balanced interpolation scheme
on trees, which we dub \stable, and show that expected values under this scheme
can be computed in linear time:

\begin{proposition}\label{prp:stablebalance}
  There exists a stable and balanced interpolation scheme on tree-shaped
  constraint sets, and we can compute the interpolated values of \emph{all} variables
  according to this scheme in time $O(\card{\calX}^2)$.
\end{proposition}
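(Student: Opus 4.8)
The plan is to construct the scheme explicitly from the local characterization already isolated in the proof of Proposition~\ref{prp:onestable}: any stable and balanced scheme must assign to each unknown node $x$, with parent $y$ and children $z_1,\dots,z_l$, the value $f(x) = \tfrac{1}{2}\bigl(f(y) + \min_i f(z_i)\bigr)$, with $f$ fixed to the exact values at the root and the leaves. I would therefore define \stable to be the solution of this system of equations, and the bulk of the work is to show that a solution exists, is computable, and yields a scheme that is genuinely balanced and stable. To build the solution I would process the tree bottom-up, maintaining for each node $x$ a function $\phi_x$ giving the value of $x$ as a function of the value $p$ assigned to its parent. For a leaf (or any exact node) $\phi_x \equiv c$ is the constant equal to its exact value. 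For an unknown internal node the equation above reads $2\,\phi_x(p) - p = \min_i \phi_{z_i}(\phi_x(p))$, i.e.\ $\Psi_x(\phi_x(p)) = p$ where $\Psi_x(t) \defeq 2t - \min_i \phi_{z_i}(t)$, so $\phi_x = \Psi_x^{-1}$. By induction each $\phi_{z_i}$ is non-decreasing and piecewise-linear with all slopes in $[0,1)$; hence $\min_i \phi_{z_i}$ has the same property, $\Psi_x$ has all slopes in $(1,2]$ and is thus strictly increasing and invertible, and $\phi_x = \Psi_x^{-1}$ is again non-decreasing and piecewise-linear with slopes in $[\tfrac12,1) \subseteq [0,1)$. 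This closes the induction and shows the $\phi_x$ are well-defined; non-contradictoriness of $\calC$ guarantees the relevant values stay inside $[0,1]$, so $\Psi_x^{-1}$ is always evaluated on its domain. Propagating top-down from the exact root value then assigns a unique value $f(x)=\phi_x(f(y))$ to every node, which defines \stable and simultaneously proves existence.

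Next I would verify the two desiderata. Balancedness is the base case: if $x$ is the only unknown variable, its ancestors form a chain of exact values (so the largest lower bound $\max_j v(y_j)$ is exactly the value of its parent $y$) and all its descendants are exact with the smallest upper bound attained at an immediate child (values increase toward the leaves). Here every $\phi_{z_i}$ is the constant $v(z_i)$, so $\Psi_x(t)=2t-\min_i v(z_i)$ and $f(x)=\Psi_x^{-1}(v(y))=\tfrac12\bigl(\max_j v(y_j)+\min_i v(z_i)\bigr)$, matching the balanced requirement.

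For stability I would argue through uniqueness. The same descending-chain argument as in Proposition~\ref{prp:onestable} shows that the system of local midpoint equations has at most one solution on any tree whose exact nodes include the root and all leaves (an interior exact node simply anchors the chain argument with $d(\cdot)=0$, exactly like a leaf). Given $\calC$ with solution $f=\stable(\calC)$, consider $\calC' \defeq \calC \cup \{x=f(x)\}$: its Hasse diagram is still a tree, now with $x$ exact, so $\stable(\calC')$ is defined as the unique solution of the local equations of $\calC'$ (equivalently, by Lemma~\ref{lem:splittingb}, independently on the tree-shaped pieces obtained by cutting at $x$). But $f$ itself already satisfies every equation of $\calC'$: the equations at all nodes other than $x$ are literally unchanged, and the former equation at $x$ is replaced by the constraint $x=f(x)$, which $f$ obeys by construction. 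By uniqueness, $\stable(\calC')=f=\stable(\calC)$, which is exactly stability.

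Finally, for the complexity bound I would read off $f$ by a single top-down pass in $O(\card{\calX})$ once all the $\phi_x$ are available, so everything reduces to computing the piecewise-linear functions bottom-up. The main obstacle is the bookkeeping here: I must bound the number of breakpoints of each $\phi_x$ and the cost of forming the lower envelope $\min_i \phi_{z_i}$ and of inverting $\Psi_x$. Using that all pieces are monotone with slopes confined to $[0,1)$, I would show the number of pieces of $\phi_x$ is linear in $\card{\restr{T}{x}}$ and that the per-node cost telescopes, via an amortized argument analogous to the one proving the $O(\card{\calX}^2)$ bound in Theorem~\ref{thm:voltree} (binarizing the combinations so that merging subtrees of sizes $n_p,n_q$ costs $O(n_p n_q)$). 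Summing over the tree then yields the claimed $O(\card{\calX}^2)$ running time. Controlling this piece count, rather than the conceptual parts above, is where I expect the real care to be needed.
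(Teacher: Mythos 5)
Your proposal is correct in its overall structure, but it follows a genuinely different route from the paper's proof, so it is worth comparing the two. The paper defines the \stable scheme \emph{explicitly and top-down}: once the parent $y$ of $x$ has value $v_y$, it sets $v_x \defeq \min_z \left(v_y + \frac{v_z - v_y}{d_x(z)+1}\right)$, the minimum over all descendant leaves $z$ of the linear interpolation from $y$ to $z$. Balancedness is then immediate, the $O(\card{\calX}^2)$ bound is read off the formula, and all of the work goes into verifying stability directly, via two structural claims: the minimizing leaf of $x$ remains minimizing for every node on the path toward it, and the parent-to-child increments are non-decreasing down the tree. You instead define the scheme \emph{implicitly}, as the solution of the local midpoint system $f(x) = \tfrac12\bigl(f(y) + \min_i f(z_i)\bigr)$, prove existence by bottom-up propagation of the piecewise-linear functions $\phi_x = \Psi_x^{-1}$, and prove uniqueness by extending the descending-chain argument of Proposition~\ref{prp:onestable} to trees with interior exact nodes (which works: any chain node with $d>0$ is unknown, so the midpoint equation applies, and the chain terminates at an exact node where $d=0$). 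Stability then falls out almost for free, since $f$ satisfies every equation of $\calC\cup\{x=f(x)\}$ and uniqueness on the cut pieces (Lemma~\ref{lem:splittingb}) forces $\stable(\calC') = f$. Your stability argument is appreciably cleaner than the paper's; what the paper's closed form buys is a trivial complexity analysis with no piecewise-linear bookkeeping.

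Two points in your version need to be closed. First, and most importantly, the induction invariant you state (monotone, piecewise-linear, slopes in $[0,1)$) is \emph{not} sufficient to bound the number of pieces of $\min_i \phi_{z_i}$: two monotone PL functions with such slopes can cross once per piece, so the pointwise min could a priori have superlinearly many pieces as you move up the tree. The missing invariant is \emph{concavity}: each $\phi_{z_i}$ is concave (constants are concave; a pointwise min of concave functions is concave; $\Psi_x(t) = 2t - \min_i\phi_{z_i}(t)$ is then convex and strictly increasing; and the inverse of an increasing convex function is concave). With concavity, each linear piece of each child function contributes at most one piece to the min, giving the linear piece bound and the telescoping $O(\card{\calX}^2)$ cost. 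In fact, unrolling your recursion with this invariant shows $\phi_x(p) = \min_z\left(\frac{d_x(z)}{d_x(z)+1}\,p + \frac{v_z}{d_x(z)+1}\right)$ over descendant leaves $z$, i.e., your implicit scheme solves to exactly the paper's explicit formula. Second, the domain issue you mention in passing (that $\Psi_x^{-1}$ is evaluated inside its range) should be handled by the companion invariant $\phi_x(p) \in [p, m_x]$ for all $p \in [0, m_x]$, where $m_x$ is the least leaf value below $x$; this follows by the same induction (evaluate $\Psi_x$ at $t=p$ and $t=m_x$ and use the intermediate value property), and it also certifies that the interpolated values respect the order constraints, which stability implicitly requires so that $\calC\cup\{x=f(x)\}$ is non-contradictory.
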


\begin{proof}
  We compute the interpolation scheme on a tree-shaped constraint set $\calC$
  top-down. For each variable $x$ which has no exact-value constraint or
  interpolated value, but whose parent $y$ has an exact value or an interpolated
  value $v_y$, we consider all leaves $z$ which are descendants of $x$ (and have
  an exact-value constraint to some value $v_z$), and set the
  interpolated value of~$x$ to be the minimum of linear interpolation from $y$
  to $z$; namely, letting $d_x(z)$ be the depth of leaf $z$ in the subtree
  rooted at $x$, we set $v_x \defeq \min_z \left(v_y + \frac{v_z - v_y}{d_x(z)+1}\right)$.
  This can clearly be done in the indicated time bound.

  We now show that the resulting interpolation scheme is indeed balanced and
  stable. It is immediate to observe that this scheme is balanced. We now show that it
  is stable. Towards this, let us first show that for every variable $x$ with
  parent $y$, if $z$ is a leaf that achieved the minimum when interpolating $x$
  to its value, then for all variables on the path from $x$ to $z$, $z$ was also
  a leaf that achieved the minimum when interpolating their value. Indeed, it
  suffices to show the claim for the first variable $x'$ of this path, a child of
  $x$, and then repeat the argument.
  We know that, from our choice when interpolating $x$, by minimality of the
  interpolation result for $x$ using $z$,
  we have
  $\frac{v_z - v_y}{d_x(z)+1} \leq \frac{v_{z'} - v_y}{d_x(z')+1}$,
  where $v_y$ is the value of the parent $y$ of $x$; let us call the first
  quantity $\delta$ and the second $\delta'$. By definition of the interpolation
  of $x$, we then have $v_x \defeq v_y + \delta$.
  Now, consider any leaf $z'$ reachable from $x'$. We must show that $z$
  achieves the minimum when interpolating $x'$; in other words, we must compare
  $\eta \defeq \frac{v_{z} - v_x}{d_{x'}(z) + 1}$ and
  $\eta' \defeq \frac{v_{z'} - v_x}{d_{x'}(z') + 1}$,
  and show that $\eta \leq \eta'$; note that $d_{x'}(z) + 1 =
  d_x(z)$ and $d_{x'}(z') + 1 = d_x(y)$.
  The quantity $\eta$ can then be rewritten as $\frac{d_x(z)+1}{d_x(z)} \times
  \frac{1}{d_x(z)+1}(v_z - v_y - \delta)$, i.e., $\frac{d_x(z)+1}{d_x(z)} \times
  \left(\delta - \frac{\delta}{d_x(z)+1}\right)$, which simplifies to $\delta$:
  hence, $\eta = \delta$.
  The quantity $\eta'$ can be written as $\frac{d_x(z') + 1}{d_x(z')} \times
  \frac{1}{d_x(z')+1} \left(v_{z'} - v_y - \delta\right)$, i.e.,
  $\frac{d_x(z') + 1}{d_x(z')} \times \left(\delta' -
  \frac{\delta}{d_x(z')+1}\right)$, which simplifies to $\frac{(d_x(z')+1)
  \delta' - \delta}{d_x(z')}$. Now, as $\delta' \geq \delta$, we deduce that
  $\eta' \geq \frac{(d_x(z')+1) \delta - \delta}{d_x(z')}$, so that
  $\eta' \geq \delta$. Hence, we have $\eta' \geq \eta$, so that the leaf $z$ also
  achieves the minimum for variable $x'$. Repeating the argument on the path from $x$ to
  $z$, we have shown the claim.

  From this initial claim, we deduce the following (*): for any variable $x$, letting $z$ be a
  leaf that achieves the minimum when interpolating $x$ (once the value of its
  parent $y$ is known), then the variables in the path from $y$ to~$z$ are
  interpolated according to linear interpolation on that chain. This is
  immediate by the previous claim, as all variables on that path are interpolated
  using linear interpolation from their parent to that same leaf (or another
  minimal leaf that sets them to the same value).

  We now show a similar auxiliary claim. Let us define, once we have interpolated
  in $\calC$, the function $u$ that maps each non-root variable $x$ to $u(x)$
  defined as the interpolated value of $x$ minus that of its parent. We show
  that (**): for any variables $y$, $x$, $x'$, where $y$ is the parent of $x$ and $x$
  is the parent of $x'$, then $u(x) \leq u(x')$.
  Indeed, let $v_y$, $v_x$ and $v_{x'}$ be the interpolated values, and
  let $z'$ be the witness
  leaf used to interpolate for $x'$. By
  definition of the scheme, we have $u(x') = \frac{v_{z'} - v_x}{d_{x'}(z') +
  1}$. Furthermore,
  letting $z$ be the witness leaf used to interpolate for $x$,
  we have $u(x) = \frac{v_z - v_y}{d_x(z)+1}$. Using the notation above,
  note that $u(x')=\eta'$ and $u(x)=\delta$. By the same reasoning as for
  claim (*) to show $\delta=\eta \leq \eta'$, we conclude that $u(x) \leq u(x')$.

  We are now ready to show that the scheme is stable.
  Consider the initial tree-shaped constraint set $\calC$, and
  let us set a variable $x$ to its interpolated value $v_x$, yielding $\calC'$.
  Note that $\calC'$ is no longer tree-shaped, but it can be rewritten by
  Lemma~\ref{lem:splittingb} to two tree-shaped constraint sets.
  It is then clear that all variables that are descendants of $x$ in $\calC$ are
  interpolated in the same manner in $\calC'$ and in $\calC$, as the scheme
  proceeds top-down and the value of $x$ in $\calC'$ is by definition the same
  as its interpolated value in $\calC$. We now show that the ancestors of $x$ in
  $\calC$ are interpolated in the same way in $\calC'$ than in $\calC$, which is
  clearly sufficient to justify the claim that \emph{all} variables in $\calC'$ are
  interpolated in the same way as in $\calC$. Let us therefore pick an ancestor $x'$ of
  $x$, which is neither $x$ nor the root variable, otherwise the claim is
  trivial; we pick it as high as possible in the tree, so the interpolated value
  $v_y$ of its ancestor~$y$ is the same in $\calC$ and in~$\calC'$.

  We first show that the interpolated value for $x'$ in $\calC'$ is no higher
  than in $\calC$. Assuming to the contrary that it is, then it must be the case
  that $x'$ was interpolated in $\calC$ using as minimal leaf $z$ some leaf which
  is a descendant of $x$ in~$\calC$, as otherwise we can still interpolate using
  $z$ in $\calC'$ and obtain the same result. Now, if $x'$ was interpolated in
  $\calC$ using $z$ as minimal leaf, then, by our preliminary claim (*), $x$ was
  interpolated in $\calC$ following linear interpolation between the parent $y$
  of $x'$ and the leaf $z$. Hence, using the new leaf $x$ in $\calC'$ to
  interpolate $x'$ in $\calC'$ yields the same result as the interpolation in
  $\calC$. Contradiction.

  Second, we show that the interpolated value for $x'$ in $\calC'$ is no lower
  than in $\calC$. Assuming to the contrary that it is, then, if $x'$ was
  interpolated in $\calC'$ following a leaf $z$ which is not $x$, then we
  immediately reach a contradiction as we should have used the same leaf $z$ to
  interpolate to the same value in $\calC$. Hence, we must have interpolated
  $x'$ in $\calC'$ using the new leaf $x$, and $x'$ was interpolated in $\calC'$
  following linear interpolation between $y$ and $x$. Let $\gamma$ be the value
  difference between two consecutive nodes in $\calC'$ on this path, and $l$ the
  length of the path. Calling~$u(x)$
  for a variable~$x$ the difference between its interpolated value and the value
  of its parent in $\calC$, we must then have $u(x') > \gamma$, because the
  value of $x'$ in $\calC$ is strictly greater than in $\calC'$, and $y$ has
  same value in $\calC$ and $\calC'$. By preliminary claim (**), we have reached a contradiction,
  because then the function $u$ always takes values which are $> \gamma$ on the
  path from $x'$ to $x$ in~$\calC$, so that when we reach $x$ we
  know that the value of $x$ in $\calC$ is $> l \cdot \gamma$, contradicting the fact that
  it is $l \cdot \gamma$, as we know from $\calC'$.
\end{proof}

\end{toappendix}

\end{document}